\documentclass[pra,aps,twocolumn,showpacs,twoside,superscriptaddress]{revtex4}

%\documentclass[12pt]{article}

% part only for tex. There are 9 parts of preliminary abbreviations.

%1=general environment

\usepackage{amsmath,amsfonts,amssymb,caption,color,epsfig,graphics,graphicx,hyperref,latexsym,mathrsfs,revsymb,theorem,url,verbatim,enumerate,epstopdf,tikz,float}
\hypersetup{colorlinks,linkcolor={blue},citecolor={red},urlcolor={blue}}
\usepackage{mathtools}

\newtheorem{definition}{Definition}
\newtheorem{proposition}[definition]{Proposition}
\newtheorem{lemma}[definition]{Lemma}

\newtheorem{theorem}[definition]{Theorem}
\newtheorem{corollary}[definition]{Corollary}
\newtheorem{conjecture}[definition]{Conjecture}

\newtheorem{remark}[definition]{Remark}
\newtheorem{example}[definition]{Example}
\newtheorem{question}[definition]{Question}

\def\bcj{\begin{conjecture}}
\def\ecj{\end{conjecture}}
\def\bcr{\begin{corollary}}
\def\ecr{\end{corollary}}
\def\bd{\begin{definition}}
\def\ed{\end{definition}}
\def\bea{\begin{eqnarray}}
\def\eea{\end{eqnarray}}
\def\bem{\begin{enumerate}}
\def\eem{\end{enumerate}}
\def\bex{\begin{example}}
\def\eex{\end{example}}
\def\bim{\begin{itemize}}
\def\eim{\end{itemize}}
\def\bl{\begin{lemma}}
\def\el{\end{lemma}}
\def\bma{\begin{bmatrix}}
\def\ema{\end{bmatrix}}
\def\bpf{\begin{proof}}
\def\epf{\end{proof}}
\def\bpp{\begin{proposition}}
\def\epp{\end{proposition}}
\def\bqu{\begin{question}}
\def\equ{\end{question}}
\def\br{\begin{remark}}
\def\er{\end{remark}}
\def\bt{\begin{theorem}}
\def\et{\end{theorem}}

%1

\def\squareforqed{\hbox{\rlap{$\sqcap$}$\sqcup$}}
\def\qed{\ifmmode\squareforqed\else{\unskip\nobreak\hfil
\penalty50\hskip1em\null\nobreak\hfil\squareforqed
\parfillskip=0pt\finalhyphendemerits=0\endgraf}\fi}
\def\endenv{\ifmmode\;\else{\unskip\nobreak\hfil
\penalty50\hskip1em\null\nobreak\hfil\;
\parfillskip=0pt\finalhyphendemerits=0\endgraf}\fi}
% unavailable for beamer:
\newenvironment{proof}{\noindent \textbf{{Proof.~} }}{\qed}
\def\Dbar{\leavevmode\lower.6ex\hbox to 0pt
{\hskip-.23ex\accent"16\hss}D}
% Define a new 'leo' style for the package that will use a smaller font.
\makeatletter
\def\url@leostyle{%
  \@ifundefined{selectfont}{\def\UrlFont{\sf}}{\def\UrlFont{\small\ttfamily}}}
\makeatother
% Now actually use the newly defined style.
\urlstyle{leo}

\def\bcj{\begin{conjecture}}
\def\ecj{\end{conjecture}}
\def\bcr{\begin{corollary}}
\def\ecr{\end{corollary}}
\def\bd{\begin{definition}}
\def\ed{\end{definition}}
\def\bea{\begin{eqnarray}}
\def\eea{\end{eqnarray}}
\def\bem{\begin{enumerate}}
\def\eem{\end{enumerate}}
\def\bex{\begin{example}}
\def\eex{\end{example}}
\def\bim{\begin{itemize}}
\def\eim{\end{itemize}}
\def\bl{\begin{lemma}}
\def\el{\end{lemma}}
\def\bpf{\begin{proof}}
\def\epf{\end{proof}}
\def\bpp{\begin{proposition}}
\def\epp{\end{proposition}}
\def\bqu{\begin{question}}
\def\equ{\end{question}}
\def\br{\begin{remark}}
\def\er{\end{remark}}
\def\bt{\begin{theorem}}
\def\et{\end{theorem}}

\def\btb{\begin{tabular}}
\def\etb{\end{tabular}}

\newcommand{\nc}{\newcommand}

%2=alphabet

\def\x{\xi}

 \nc{\bbA}{\mathbb{A}} \nc{\bbB}{\mathbb{B}} \nc{\bbC}{\mathbb{C}}
 \nc{\bbD}{\mathbb{D}} \nc{\bbE}{\mathbb{E}} \nc{\bbF}{\mathbb{F}}
 \nc{\bbG}{\mathbb{G}} \nc{\bbH}{\mathbb{H}} \nc{\bbI}{\mathbb{I}}
 \nc{\bbJ}{\mathbb{J}} \nc{\bbK}{\mathbb{K}} \nc{\bbL}{\mathbb{L}}
 \nc{\bbM}{\mathbb{M}} \nc{\bbN}{\mathbb{N}} \nc{\bbO}{\mathbb{O}}
 \nc{\bbP}{\mathbb{P}} \nc{\bbQ}{\mathbb{Q}} \nc{\bbR}{\mathbb{R}}
 \nc{\bbS}{\mathbb{S}} \nc{\bbT}{\mathbb{T}} \nc{\bbU}{\mathbb{U}}
 \nc{\bbV}{\mathbb{V}} \nc{\bbW}{\mathbb{W}} \nc{\bbX}{\mathbb{X}}
 \nc{\bbZ}{\mathbb{Z}}

%\bbQ denotes the set of rational, real and integer numbers in transparency.

 \nc{\bA}{{\bf A}} \nc{\bB}{{\bf B}} \nc{\bC}{{\bf C}}
 \nc{\bD}{{\bf D}} \nc{\bE}{{\bf E}} \nc{\bF}{{\bf F}}
 \nc{\bG}{{\bf G}} \nc{\bH}{{\bf H}} \nc{\bI}{{\bf I}}
 \nc{\bJ}{{\bf J}} \nc{\bK}{{\bf K}} \nc{\bL}{{\bf L}}
 \nc{\bM}{{\bf M}} \nc{\bN}{{\bf N}} \nc{\bO}{{\bf O}}
 \nc{\bP}{{\bf P}} \nc{\bQ}{{\bf Q}} \nc{\bR}{{\bf R}}
 \nc{\bS}{{\bf S}} \nc{\bT}{{\bf T}} \nc{\bU}{{\bf U}}
 \nc{\bV}{{\bf V}} \nc{\bW}{{\bf W}} \nc{\bX}{{\bf X}}
 \nc{\bZ}{{\bf Z}}

%\bQ, \bR, \bZ denotes the set of rational, real and integer numbers.

\nc{\cA}{{\cal A}} \nc{\cB}{{\cal B}} \nc{\cC}{{\cal C}}
\nc{\cD}{{\cal D}} \nc{\cE}{{\cal E}} \nc{\cF}{{\cal F}}
\nc{\cG}{{\cal G}} \nc{\cH}{{\cal H}} \nc{\cI}{{\cal I}}
\nc{\cJ}{{\cal J}} \nc{\cK}{{\cal K}} \nc{\cL}{{\cal L}}
\nc{\cM}{{\cal M}} \nc{\cN}{{\cal N}} \nc{\cO}{{\cal O}}
\nc{\cP}{{\cal P}} \nc{\cQ}{{\cal Q}} \nc{\cR}{{\cal R}}
\nc{\cS}{{\cal S}} \nc{\cT}{{\cal T}} \nc{\cU}{{\cal U}}
\nc{\cV}{{\cal V}} \nc{\cW}{{\cal W}} \nc{\cX}{{\cal X}}
\nc{\cZ}{{\cal Z}}

% \cX denotes a set, etc in mathematical definition.

\nc{\hA}{{\hat{A}}} \nc{\hB}{{\hat{B}}} \nc{\hC}{{\hat{C}}}
\nc{\hD}{{\hat{D}}} \nc{\hE}{{\hat{E}}} \nc{\hF}{{\hat{F}}}
\nc{\hG}{{\hat{G}}} \nc{\hH}{{\hat{H}}} \nc{\hI}{{\hat{I}}}
\nc{\hJ}{{\hat{J}}} \nc{\hK}{{\hat{K}}} \nc{\hL}{{\hat{L}}}
\nc{\hM}{{\hat{M}}} \nc{\hN}{{\hat{N}}} \nc{\hO}{{\hat{O}}}
\nc{\hP}{{\hat{P}}} \nc{\hR}{{\hat{R}}} \nc{\hS}{{\hat{S}}}
\nc{\hT}{{\hat{T}}} \nc{\hU}{{\hat{U}}} \nc{\hV}{{\hat{V}}}
\nc{\hW}{{\hat{W}}} \nc{\hX}{{\hat{X}}} \nc{\hZ}{{\hat{Z}}}

\nc{\hn}{{\hat{n}}}

%3=math symbol, personal

%3.1 tensor rank

% canonical decomposition, namely the convex sum of r product states

% canonical decomposition over the real field

% symmetric canonical decomposition, namely the convex sum of r symmetric product states

% symmetric canonical decomposition over the real field

% orthogonal canonical decomposition, namely the convex sum of r orthogonal product states

% strong orthogonal canonical decomposition, namely the convex sum of r locally orthogonal product states

%rk=tensor rank with canonical decomposition

%rk=tensor rank with real canonical decomposition

%srk=symmetric tensor rank with symmetric canonical decomposition

%srk=symmetric tensor rank with real canonical decomposition

%rrk=regularized tensor rank

%rsrk=regularized symmetric tensor rank

%grk=generic tensor rank equal to the tensor rank of most tensors in the space; there is only one grk

%trk=typical tensor rank equal to the tensor rank of a part of tensors in the space; there may exist a few different trk

%ark=asymmetric tensor rank, where decomposition contains at least one asymmetric product states

%brk=border tensor rank

%bsrk=symmetric border tensor rank

%ork=orthogonal tensor rank

%sork=strong orthogonal tensor rank

%3.2 general

%birank=(rank,rank^\G)

%cps=closest product state in the geometric measure of entanglement

%cps=closest separable state in the geometric measure of entanglement

%csd=canonical separable decomposition, i.e., reaching the length

%EV=eigenvalue

%Loc=local CPTP map

%pr=polynomial rank in algebraic geometry for symmetric states

%pro=product states
\def\rank{\mathop{\rm rank}}

%sd=separable decomposition

%sr=Schmidt rank

\def\tr{\mathop{\rm Tr}}

%3.3 abbreviation

%4=math symbol, default

\newcommand{\bra}[1]{\langle#1|}
\newcommand{\ket}[1]{|#1\rangle}

\newcommand{\ketbra}[2]{|#1\rangle\!\langle#2|}
\newcommand{\braket}[2]{\langle#1|#2\rangle}

\newcommand{\fl}[2]{\lfloor\frac{#1}{#2}\rfloor}

%5=color

% open questions

% suspicious result or derivation

%6=journal

%\newcommand{\pra}{Phys. Rev. A~}

% APS journals, such as jmo, pra, prl, rmp etc are defined by default.

%7=To make unique the abbreviation for the title of parts and sections, we follow the rules:

%a. Put "q" ahead of the word of quantum physics sections, e.g., %physics=qphysics;

%b. Put "c" ahead of the word of computer sections, e.g., %NPCvsNP=cNPCvsNP;

%c. Put "m" ahead of the word of mathematics sections, e.g., %matrix=mmatrix;

%d. When there are identical abbreviations, such as quantum operations, quantum operations and entanglement,
%quantum operations and distinguishing, mark them as qoperations, qoentanglement, qodistingushing respectively. In other words,
%take the first alphabet of the first n words in turn.

%8=To make unique the abbreviation for the references, we follow the rules:

%a. When there are only one name, take the last name and year, e.g., Lin Chen 2011=chen11;

%b. When there are two names take the initial alphabets of both last names and year, e.g.,
%Lin Chen and Huangjun Zhu, 2011=cz11;

%c. When there are three or more names, take the initial alphabets of both last names and year, e.g.,
%Lin Chen, Huangjun Zhu, and Tzu-Chieh Wei, 2011=czw11;

%d. When there are identical abbreviations, put the publication name in the end; e.g., hhh00PRL and hhh00PRA;

%e. When the reference is a book, put "book" in the end; e.g., harris92book.

\def\Dbar{\leavevmode\lower.6ex\hbox to 0pt
{\hskip-.23ex\accent"16\hss}D}
%\author {{Dragomir {\v{Z} \Dbar}okovi{\'c}}}
%\affiliation{Department of Pure Mathematics and Institute for
%Quantum Computing, University of Waterloo, Waterloo, Ontario, N2L 3G1, Canada}\email{djokovic@uwaterloo.ca}

\begin{document}

\title{Unextendible product bases from tile structures  and their local entanglement-assisted distinguishability}

%\date{\today}

\pacs{03.65.Ud, 03.67.Mn}

%\author{Mengyao Hu}\email[]{mengyaohu@buaa.edu.cn}
%\affiliation{School of Mathematical Sciences, Beihang University, Beijing 100191, China}
\author{Fei Shi}
\email{shifei@mail.ustc.edu.cn}
\affiliation{School of Cyber Security,
	University of Science and Technology of China, Hefei, 230026, People's Republic of China}

\author{Xiande Zhang}
\email{drzhangx@ustc.edu.cn}
\affiliation{School of Mathematical Sciences,
	University of Science and Technology of China, Hefei, 230026, People's Republic of China}

\author{Lin Chen}
\email{linchen@buaa.edu.cn}
\affiliation{School of Mathematical Sciences, Beihang University, Beijing 100191, China}
\affiliation{International Research Institute for Multidisciplinary Science, Beihang University, Beijing 100191, China}

\begin{abstract}
We completely characterize the condition when a tile structure provides an unextendible product basis (UPB), and construct UPBs of different large sizes in $\bbC^m\otimes\bbC^n$ for any $n\geq m\geq 3$. This solves an open problem in [S. Halder \emph{et al.}, \href{https://journals.aps.org/pra/abstract/10.1103/PhysRevA.99.062329}{Phys. Rev. A \textbf{99}, 062329 (2019)}]. As an application, we show that our UPBs of size $(mn-4\fl{m-1}{2})$ in $\bbC^m\otimes\bbC^n$ can be  perfectly distinguished by local operations and classical communications assisted with a $\lceil\frac{m}{2}\rceil\otimes\lceil\frac{m}{2}\rceil$ maximally entangled state.

%We construct the notions of tile and U-tile structures. A tile structure corresponds to an $m\times n$ unextendible product basis (UPB) if and only if it is a U-tile structure. We construct UPBs of large size in terms of the U-tile structures. In particular we

\end{abstract}

\maketitle

%\tableofcontents

%\Large

\section{Introduction}
\label{sec:int}

Unextendible product basis (UPB) is
 a set of orthonormal product states whose complementary space has no product states. They give a systematic construction of positive-partial-transpose (PPT) entangled states as follows  \cite{bennett1999unextendible}. Given a UPB $\{\ket{\psi_i}\}_{i=1}^t$ in $\bbC^m\otimes\bbC^n$, then the state
$
\rho=\frac{1}{mn-t}(I-\sum_{i=1}^{t}\ketbra{\psi_i}{\psi_i})
$
is a PPT entangled state.
UPBs are also connected to the quantum nonlocality without entanglement, Bell inequalities without quantum violation and fermionic system \cite{bennett1999unextendible,dms03, Tura2012Four, Chen2014Unextendible, Augusiak2012tight,augusiak2011bell}. In spite of much efforts devoted to the construction of UPBs of small size \cite{AL01,Fen06,Chen2013The}, there has been little progress on the construction of UPBs of large size. We shall address this problem, and it is the first motivation of this work.

Although  UPBs cannot be distinguished perfectly by local operations and classical communications (LOCC) \cite{de2004distinguishability}, Ref. \cite{cohen2008understanding} has shown the local distinguishability of  UPBs using LOCC protocols assisted by entanglement as a nonlocal resource. Further the UPB called  GenTiles2 in $\bbC^m\otimes\bbC^n$ ($m\leq n$) can be distinguished by LOCC with a $\lceil\frac{m}{2}\rceil\otimes\lceil\frac{m}{2}\rceil$ maximally entangled state \cite{cohen2008understanding,dms03}. Then, local distinguishability with entanglement as a resource
attracted more and more attention \cite{ghosh2001distinguishability,bandyopadhyay2016entanglement,zhang2016entanglement,gungor2016entanglement,zhang2018local}. Recently, it has been shown that some UPBs constructed from tile structures in $\bbC^m\otimes\bbC^m$ can be distinguished by LOCC with a $\lceil\frac{m}{2}\rceil\otimes\lceil\frac{m}{2}\rceil$ maximally entangled state when $m\geq 3$ is odd \cite{zhang2020locally,halder2019family}. In particular, Ref. \cite{halder2019family} wonders whether the construction of UPBs can be generalized to even-dimensional systems. Further, Ref. \cite{cohen2008understanding} asks whether other types of UPBs can be locally distinguished
by efficiently using entanglement resource. We shall give positive answers to both problems above. This is the second motivation of this work.

In this paper, we construct UPBs of large size, by constructing  tile structures illustrated in Figure \ref{Figure:example}. We begin by reviewing the connection of  UPBs and tile structures,  and introduce the U-tile structures in Definition \ref{df:utile}. We present the main result of this paper in Theorem \ref{Thm:U-tile}, that is, a tile structure with $s$-tiles corresponds to a UPB of size $(mn-s+1)$ in $\bbC^m\otimes \bbC^n$ if and only if this tile structure is a U-tile structure. By applying Theorem \ref{Thm:U-tile}, we generalize the construction in \cite{halder2019family} and show that	there exists a UPB of size $(mn-4\fl{m-1}{2})$ in $\bbC^m\otimes\bbC^n$ for $3\leq m\leq n$ in Proposition~\ref{pro:generalization}.
In Proposition~\ref{pro:upb42m-1}, we show that there is a UPB of size $(mn-k)$  in $\bbC^m\otimes\bbC^n$ for $4\leq m\leq n$, where $ 4\leq k\leq 2m-1$, and the maximum size of UPBs in $\bbC^m\otimes\bbC^n$ is $mn-4$ for $3\leq m\leq n$.
Finally, we show the UPB constructed from Proposition~\ref{pro:generalization} can be perfectly distinguished by LOCC with a $\lceil\frac{m}{2}\rceil\otimes\lceil\frac{m}{2}\rceil$ maximally entangled state in Theorem~\ref{Thm:mevendis}.

\begin{figure}[H]
	\centering
	\begin{tikzpicture}
	[%%%%%%%%%%%%%%%%%%%%%%%%%%%%%%
	box/.style={rectangle,draw=black,thick, minimum size=1cm},
	]%%%%%%%%%%%%%%%%%%%%%%%%%%%%%%
	
	\foreach \x in {0,1,...,3}{
		\foreach \y in {0,1,...,3}
		\node[box] at (\x,\y){};
	}
	\node[box] at (0,3){1};
	\node[box] at (1,3){1};
	\node[box]  at (2,3){2};
	\node[box]  at (2,0){2};
	\node[box]   at (3,3){3};
	\node[box]   at (3,2){3};
	\node[box]   at (3,1){3};
	\node[box]   at (1,2){4};
	\node[box]   at (1,1){4};
	\node[box]   at (1,0){4};
	\node[box]   at (0,0){5};
	\node[box]   at (3,0){5};
	\node[box]   at (0,2){6};
	\node[box]   at (0,1){6};
	\node[box]   at (2,2){6};
	\node[box]   at (2,1){6};
	\draw (-1,3) node {0};
	\draw (-1,2) node {1};
	\draw (-1,1) node {2};
	\draw (-1,0) node {3};
	\draw (-1.8,1.5) node {A};
	\draw (0,4) node {0};
	\draw (1,4) node {1};
	\draw (2,4) node {2};
	\draw (3,4) node {3};
	\draw (1.5,4.8) node {B};
	\end{tikzpicture}
	\caption{This is a tile structure of system $A,B$ in $\bbC^4\otimes \bbC^4$. A tile structure is a rectangle paved by some disjoint tiles. It gives a complete orthogonal product basis (COPB). We can obtain a UPB by  deleting some  states and adding a special state of COPB. We shall explain more details in Example \ref{ex:4x4upb=size11}.}  \label{Figure:example}
\end{figure}

We briefly review the task of distinguishing bipartite states by LOCC. Alice and Bob share a set of bipartite orthogonal states, and they don't know which state their system is in. Their aim is to determine the state by LOCC. It is shown that any two orthogonal pure states can be distinguished by LOCC \cite{walgate2000local}. There exists a product bases in $\bbC^3\otimes\bbC^3$ that cannot be  distinguished by LOCC \cite{bennett1999quantum}. Any three of Bell states cannot be  distinguished by LOCC \cite{ghosh2001distinguishability}. Our results on the construction of UPBs and their discrimination can be applied to these topics and produce more efficient protocols.

The rest of this paper is organized as follows. In Sec. \ref{sec:pre}, we introduce the preliminary knowledge used in this paper, such as UPBs and tile structures. In Sec. \ref{sec:utile} we connect U-tile structures and UPBs, and present the main result of this paper. In Sec. \ref{sec:app} we apply our results to investigate local distinguishability of UPBs by using entanglement resource. We conclude in Sec. \ref{sec:con}.

\section{Preliminary}
\label{sec:pre}
In this section we introduce the preliminary knowledge and facts.
Throughout this paper, we do not normalize states and operators for simplicity. Every bipartite pure state can be written as $\ket{\psi}=\sum_{i,j}m_{i,j}\ket{i}\ket{j}\in\bbC^m\otimes\bbC^n$, where $\ket{i}$ and $\ket{j}$ are the computational bases of $\bbC^m$ and $\bbC^n$, respectively. There exists a one to one correspondence between the state $\ket{\psi}$  and the $m\times n$ matrix $M=(m_{ij})$. If $\rank(M)=1$, then  $\ket{\psi}$ is a product state, and if $\rank(M)>1$ then $\ket{\psi}$ is an entangled state. For example, the state $\ket{00}+\ket{11}$ in $\bbC^{2}\otimes\bbC^{2}$ corresponds to the matrix
$M=\begin{pmatrix}
1 &0\\
0 &1
\end{pmatrix}$.
It is an entangled state since $\rank(M)=2$. Assume $\ket{\psi_i}$ corresponds to a matrix $M_i$, $i=1,2$, then $\braket{\psi_1}{\psi_2}=\tr(M_1^{\dagger}M_2)$, where $\braket{\psi_1}{\psi_2}$ is the inner product of $\ket{\psi_1}$ and $\ket{\psi_2}$.

To present the definition of UPBs, we consider the complete orthogonal product basis (COPB). This is a set of orthogonal product states that spans $\cH=\bbC^m\otimes\bbC^n$. The incomplete orthogonal product basis (ICOPB) is a set of pure orthogonal product states that spans a subspace $\cH_S$ of	$\cH$. An unextendible product basis (UPB) is an ICOPB such that there is no product state in $\cH_S^{\bot}$.

Now we define the tile structure in $\bbC^m\otimes \bbC^n$. This is an $m\times n$ rectangle $\cT$ paved by disjoint tiles $\{t_i\}$, denoted by $\cT=\cup_{i}t_i$. A tile $t_i$ must be a rectangle. In our notation, a rectangle could be  separated, that is, a set of cells that can be changed to a rectangle through row and column permutations.  In Figure~\ref{Figure:example}, it is a $4\times 4$ rectangle $\cT$ paved by $6$ disjoint tiles, where grids of the same index form a tile.  We have $\cT=\cup_{i=1}^6 t_i$. Denote $w_k=e^{\frac{2\pi\sqrt{-1}}{k}}$. Next we show how to construct a UPB of size $11$ by Figure \ref{Figure:example}.

\begin{example}
\label{ex:4x4upb=size11}
In Figure~\ref{Figure:example},
	tile $1$ gives two orthogonal states in $\bbC^4\otimes\bbC^4$, namely $\ket{0}(\ket{0}+\ket{1})$ and $\ket{0}(\ket{0}-\ket{1})$. Tile $2$ gives two orthogonal states $(\ket{0}+\ket{3})\ket{2}$ and $(\ket{0}-\ket{3})\ket{2}$. One can similarly derive the states for other tiles. Since tiles $i$ and $j$ are disjoint, we know that any state from tile $i$ is orthogonal to any state from tile $j$ for $1\leq i\neq j\leq 6$. As a result,
 Figure~\ref{Figure:example} provides a COPB as follows. Denote this basis by $\mathcal{B}$.
	\begin{align*}
	&\ket{\psi_1^{(1)}}=\ket{0}(\ket{0}+\ket{1}), &\ket{\psi_1^{(2)}}=\ket{0}(\ket{0}-\ket{1}),\\
	&\ket{\psi_2^{(1)}}=(\ket{0}+\ket{3})\ket{2}, &\ket{\psi_2^{(2)}}=(\ket{0}-\ket{3})\ket{2},\\
	&\ket{\psi_3^{(1)}}=(\ket{0}+\ket{1}+\ket{2})\ket{3},&\\	
	&\ket{\psi_3^{(2)}}=(\ket{0}+w_3\ket{1}+w_3^2\ket{2})\ket{3},&\\ &\ket{\psi_3^{(3)}}=(\ket{0}+w_3^2\ket{1}+w_3\ket{2})\ket{3},&\\
    &\ket{\psi_4^{(1)}}=(\ket{1}+\ket{2}+\ket{3})\ket{1},&\\	
    &\ket{\psi_4^{(2)}}=(\ket{1}+w_3\ket{2}+w_3^2\ket{3})\ket{1},&\\ &\ket{\psi_4^{(3)}}=(\ket{1}+w_3^2\ket{2}+w_3\ket{3})\ket{1},& \\
	&\ket{\psi_5^{(1)}}=\ket{3}(\ket{0}+\ket{3}),  &\ket{\psi_5^{(2)}}=\ket{3}(\ket{0}-\ket{3}), \\
	&\ket{\psi_6^{(1)}}=(\ket{1}+\ket{2})(\ket{0}+\ket{2}),&\\
	&\ket{\psi_6^{(2)}}=(\ket{1}+\ket{2})(\ket{0}-\ket{2}),&\\ &\ket{\psi_6^{(3)}}=(\ket{1}-\ket{2})(\ket{0}+\ket{2}),&\\
	&\ket{\psi_6^{(4)}}=(\ket{1}-\ket{2})(\ket{0}-\ket{2}).&
	\end{align*}
 Let
\begin{equation*}
\ket{S}=(\ket{0}+\ket{1}+\ket{2}+\ket{3})(\ket{0}+\ket{1}+\ket{2}+\ket{3})
\end{equation*}
be a \emph{stopper} state. We claim that the set \[\cU=\mathcal{B}\cup\{\ket{S}\}\setminus \{\ket{\psi_i^{(1)}}\}_{i=1}^6\] is a UPB in $\bbC^4\otimes \bbC^4$. First one can verify that $\cU$ is an ICOPB. Next the missing states $\{\ket{\psi_i^{(1)}}\}_{i=1}^6$ are not orthogonal to $\ket{S}$ but are orthogonal to all states in $\cU\setminus\{\ket{S}\}$. Then any state in $\cH_{\cU}^{\bot}$ is a linear combination of at least two of the missing states, and is orthogonal to $\ket{S}$. Assume $\ket{\psi}=a_1\ket{\psi_1^{(1)}}+a_2\ket{\psi_2^{(1)}}+a_3\ket{\psi_3^{(1)}}+a_4\ket{\psi_4^{(1)}}+a_5\ket{\psi_5^{(1)}}+a_6\ket{\psi_6^{(1)}}\in\cH_{\cU}^{\bot}$ is a product state, where at least two  coefficients are nonzero. By the correspondence between pure states and matrices,  $\ket{S}$ corresponds to the all one matrix $	J=\begin{pmatrix}
&1  &1   &1   &1            \\
&1  &1   &1   &1   \\
&1  &1   &1   &1   \\
&1  &1   &1   &1       \\
\end{pmatrix}.$
Suppose that $\ket{\psi}$ corresponds to a matrix $M=\begin{pmatrix}
&a_1  &a_1     &a_2   &a_3            \\
&a_6 &a_4    &a_6       &a_3   \\
&a_6 &a_4   &a_6     &a_3   \\
&a_5    &a_4     &a_2   &a_5       \\
\end{pmatrix}.
$
Since $\rank(M)=1$, we have $a_1=a_2=a_3=a_4=a_5=a_6\neq 0$. However $\ket{\psi}$ (resp. $M$) cannot be orthogonal to $\ket{S}$ (resp. $J$), and we have a contradiction. We have proved  that $\cU$ is a UPB of size $11$ in $\bbC^4\otimes \bbC^4$.
\end{example}

Recently it has been shown that there is no $4$-qubit UPB of size $11$ \cite{chen2018no}. In contrast, we have constructed a UPB of size $11$ in $\bbC^4\otimes\bbC^4$. This shows the  difference between the $4$-qubit system and the bipartite system  $\bbC^4\otimes\bbC^4$, because the two-qubit entangling states are allowed in the latter system.

From Example~\ref{ex:4x4upb=size11}, one may wonder what tile structures can give UPBs. By this motivation, we introduce the U-tile structures in Definition \ref{df:utile}. For a tile structure $\cT$ in $\bbC^m\otimes\bbC^n$, let  $T=\cup_{j=1}^{k}t_{i_j}$ ($k\geq 2$),  where $t_{i_j}$ is a tile. If $T$ is a sub-rectangle  of $\cT$, then $T$ is called a \emph{special rectangle} of  $\cT$. In Figure~\ref{Figure:non}, tiles $1, 2$ form a special rectangle, tiles $3, 5$ form a special rectangle, and tiles $3, 4, 5$ form a special rectangle and so on. The tile structure in Figure~\ref{Figure:example} has only one special rectangle, namely the tile structure itself $\cup_{i=1}^{6}t_i$.
For convenience, we denote $R_i$ and $C_i$ the sets of row indices and column indices of the tile $i$, respectively.
 For example, tile $1$ in Figure~\ref{Figure:non} has  row indices $0$ and   column indices $0,1$, that is, $R_1=\{0\}$ and $C_1=\{0,1\}$.  Now we are in a position to define the U-tile structure.

\begin{figure}[H]
	\centering
	\begin{tikzpicture}
	[%%%%%%%%%%%%%%%%%%%%%%%%%%%%%%
	box/.style={rectangle,draw=black,thick, minimum size=1cm},
	]%%%%%%%%%%%%%%%%%%%%%%%%%%%%%%
	
	\foreach \x in {0,1,...,3}{
		\foreach \y in {0,1,...,3}
		\node[box] at (\x,\y){};
	}
	\node[box]  at (0,3){1};
	\node[box]  at (1,3){1};
	\node[box]  at (2,3){2};
	\node[box]  at (3,3){2};
	\node[box]  at (0,2){3};
	\node[box]  at (3,2){3};
	\node[box]  at (1,2){4};
	\node[box]  at (2,2){4};
	\node[box]  at (1,1){4};
	\node[box]  at (2,1){4};
	\node[box]  at (0,1){5};
	\node[box]  at (3,1){5};
	\node[box]  at (0,0){6};
	\node[box]  at (1,0){6};
	\node[box]  at (2,0){6};
	\node[box]  at (3,0){6};
	
	\draw (-1,3) node {0};
	\draw (-1,2) node {1};
	\draw (-1,1) node {2};
	\draw (-1,0) node {3};
	\draw (0,4) node {0};
	\draw (1,4) node {1};
	\draw (2,4) node {2};
	\draw (3,4) node {3};
	\end{tikzpicture}
	\caption{Tile structure in $\bbC^4\otimes \bbC^4$.}\label{Figure:non}
\end{figure}

\begin{definition}
\label{df:utile} Given a  tile structure $\cT$, if  any special rectangle $T$ of  $\cT$ can not be partitioned into two smaller special rectangles or tiles of $\cT$,   then we call $\cT$  a \emph{U-tile structure}.
%	 If for any special rectangle $T=\cup_{i=1}^{k}t_i$ of a tile structure $\cT$,  $\{r_i\}_{i=1}^{k}$ can not be divided into two disjoint sets and $\{c_i\}_{i=1}^{k}$ can not be divided into two disjoint sets,  then the tile structure $\cT$ is called a U-tile structure.
\end{definition}

Since the order among all tiles does not matter, we can always assume that a special rectangle $T=\cup_{i=1}^{k}t_{i}$ for some $k\geq 2$. By Definition~\ref{df:utile}, if $\cT$ is a U-tile structure, then both $\{R_i\}_{i=1}^{k}$ and $\{C_i\}_{i=1}^{k}$ can not  be partitioned into two parts, such that any member from one part is disjoint from all members from another part.
  The tile structure in Figure~\ref{Figure:non} is not a U-tile structure, since  the special rectangle $t_1\cup t_2$ can be partitioned into two tiles. It is easy to check that Figure~\ref{Figure:example} is a  U-tile structure, since it  has only one special rectangle $\cup_{i=1}^{6}t_i$, and $\{C_i\}_{i=1}^{6}=\{\{0,1\}, \{2\}, \{3\}, \{1\}, \{0,3\}, \{0,2\}\}$ can not be partitioned into two parts without intercrossing members. The same is for rows. In the next section, we will show that U-tile structures  correspond to UPBs.

\section{U-Tile structures and UPBs}
\label{sec:utile}

In this section, we investigate the relations between tile structures and UPBs. We give a necessary and sufficient condition for a tile structure that corresponds to a UPB in Theorem \ref{Thm:U-tile}. Then we construct some UPBs with large size by constructing U-tile structures in Propositions \ref{pro:generalization} and \ref{pro:upb42m-1}.

\begin{theorem}\label{Thm:U-tile}
	A tile structure with $s$-tiles corresponds to a UPB of size $(mn-s+1)$ in $\bbC^m\otimes \bbC^n$ if and only if this tile structure is a U-tile structure.
\end{theorem}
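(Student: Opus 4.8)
The plan is to reduce the whole statement to a question about the orthogonal complement $\cH_{\cU}^{\bot}$ of the set $\cU=\mathcal{B}\cup\{\ket{S}\}\setminus\{\ket{\psi_i^{(1)}}\}_{i=1}^{s}$ built exactly as in Example~\ref{ex:4x4upb=size11}. Each tile $t_i$ of size $\abs{R_i}\times\abs{C_i}$ yields $\abs{R_i}\abs{C_i}$ orthogonal product states (a Fourier basis in the rows tensored with one in the columns), so the COPB $\mathcal{B}$ has size $mn$, and the deleted state $\ket{\psi_i^{(1)}}$ is the uniform superposition over $R_i\times C_i$, whose matrix is the $0/1$ indicator of the cells of $t_i$. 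First I would check the routine facts: restricted to each tile, $\ket{S}$ agrees with $\ket{\psi_i^{(1)}}$ and is therefore orthogonal to every retained state $\ket{\psi_i^{(k)}}$, $k\geq2$; hence $\cU$ is an ICOPB of size $mn-s+1$. The same observation, together with $\braket{S}{\psi_i^{(1)}}=\abs{R_i}\abs{C_i}$, identifies $\cH_{\cU}^{\bot}=\{\sum_i a_i\ket{\psi_i^{(1)}}:\sum_i a_i\abs{R_i}\abs{C_i}=0\}$. A vector $\sum_i a_i\ket{\psi_i^{(1)}}$ corresponds to the $m\times n$ matrix $M$ that is constant, equal to $a_i$, on each tile, so the theorem becomes: $\cH_{\cU}^{\bot}$ contains a rank-one matrix constant on tiles with vanishing entry-sum if and only if $\cT$ is not a U-tile structure.

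For the direction ``not a U-tile structure $\Rightarrow$ not a UPB'' I would argue by explicit construction. If $\cT$ is not a U-tile structure, some special rectangle $T$ is partitioned into two sub-rectangles. Since a combinatorial rectangle can be split into two combinatorial rectangles only by a single column-wise or row-wise cut, we may write (say) $T=R\times C$ with $C=C'\sqcup C''$, where $R\times C'$ and $R\times C''$ are each unions of whole tiles. Setting $M_{rc}=1$ on $R\times C'$, $M_{rc}=\lambda$ on $R\times C''$, and $M_{rc}=0$ elsewhere produces a rank-one matrix that is constant on tiles; choosing $\lambda=-\abs{C'}/\abs{C''}$ makes $\sum_{r,c}M_{rc}=0$. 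This is a nonzero product state in $\cH_{\cU}^{\bot}$, so $\cU$ is not a UPB.

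The harder direction is ``U-tile structure $\Rightarrow$ UPB''. Suppose $M$ has rank one, say $M_{rc}=u_rv_c$, is constant on tiles, and satisfies $\sum_{r,c}M_{rc}=(\sum_r u_r)(\sum_c v_c)=0$; at least two coefficients $a_i$ are nonzero, since a single nonzero $a_i$ is killed by the sum condition. The first key step is to show that $\supp(u)\times\supp(v)$ is exactly the union of the tiles with $a_i\neq0$, hence a special rectangle $T$ with at least two tiles: a cell lies in the support iff its tile value is nonzero, and any active tile is forced inside $\supp(u)\times\supp(v)$. On $T$ every entry is nonzero, so constancy of $M$ on each tile forces $u$ to be constant on each $R_i$ and $v$ constant on each $C_i$. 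The decisive step is to exploit the hypothesis: I would show that ``$T$ admits no such cut'' is equivalent to connectivity of the intersection graphs of $\{R_i\}$ and of $\{C_i\}$ (a column cut corresponds precisely to partitioning $\{C_i\}$ into two parts with disjoint unions, as in the reformulation after Definition~\ref{df:utile}). Connectivity then upgrades the local constancy to global constancy, so $u$ is constant on $\supp(u)$ and $v$ on $\supp(v)$; both factor-sums are then nonzero, contradicting $\sum_{r,c}M_{rc}=0$.

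I expect the main obstacle to be this last direction, and within it the two combinatorial reductions: first, recognising that the support of any candidate product state is itself a special rectangle, so that the U-tile hypothesis even applies to it; and second, establishing the clean equivalence between non-splittability of a special rectangle and connectivity of its row and column intersection graphs. Once these are in place the contradiction with orthogonality to the stopper $\ket{S}$ is immediate, so the only genuinely delicate point is controlling the zero pattern of $u$ and $v$ carefully enough to guarantee that the support is tile-aligned.
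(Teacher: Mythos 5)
Your proposal is correct and follows essentially the same route as the paper's proof: the same per-tile Fourier construction with a stopper state, the same explicit rank-one matrix (values $1$ and $-\ell/(h-\ell)$ across a column cut) for the necessity direction, and the same ``support of the product state is a special rectangle, on which tile-constancy plus the U-tile hypothesis forces global constancy, contradicting orthogonality to $J$'' argument for sufficiency. The only difference is that you make explicit, via connectivity of the row and column intersection graphs, the step the paper asserts without elaboration.
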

\begin{proof}First, we prove the sufficiency. Assume the U-tile structure  $\cT=\cup_{i=1}^{s}t_i$ with row indices $0,1,\ldots, m-1$ and column indices $0,1,\ldots, n-1$. For each tile $t_i$  with rows in $R_i=\{r_0,r_1,\dots,r_{p-1}\}$ and columns in $C_i=\{c_0,c_1,\dots,c_{q-1}\}$, we construct a set  of $pq$ orthogonal product states as follows.
For each $0\leq k\leq p-1$ and $0\leq l\leq q-1$, let
\[\ket{\phi_i^{(k,l)}}=\left(\sum_{e=0}^{p-1}w_p^{ke}\ket{r_{e}}\right)\left(\sum_{e=0}^{q-1}w_q^{le}\ket{c_{e}}\right).\]
Denote $\cB_i$ the collection of these $pq$ states given by tile $i$. Let $\ket{S}=(\sum_{e=0}^{m-1}\ket{e})(\sum_{j=0}^{n-1}\ket{j})$ be the stopper state. We claim that \[\cU=\left\{\cB_i\setminus\ket{\phi_i^{(0,0)}}\right\}_{i=1}^s\cup \left\{\ket{S}\right\}\]is a UPB of size $(mn-s+1)$ in $\bbC^m\otimes\bbC^n$.

% We can construct an orthogonal set $\{\ket{\phi_i^{(k)}}\}_{k=1}^{p}$ in $\bbC^m$, where $\ket{\phi_i^{(1)}}=\ket{\ell_1}+\ket{\ell_2}+\dots+\ket{\ell_p}$, and $\ket{\phi_i^{(k)}}=\sum_{e=0}^{p-1}w_p^{(k-1)e}\ket{\ell_{e+1}}$ for $2\leq k\leq p$. In the same way, can construct an orthogonal set $\{\ket{\varphi_i^{(h)}}\}_{h=1}^{q}$, in $\bbC^n$, where $\ket{\varphi_i^{(1)}}=\ket{j_1}+\ket{j_2}+\dots+\ket{j_q}$, and $\ket{\varphi_i^{(h)}}=\sum_{e=0}^{q-1}w_q^{(h-1)e}\ket{j_{e+1}}$ for $2\leq h\leq q$. Then $\cB_i=\{\ket{\phi_i^{(k)}}\ket{\varphi_i^{(h)}}\}_{k=1,2,\dots,p, h=1,2,\dots,q}$ is an orthogonal set in $\bbC^m\otimes\bbC^n$..
	
The missing states are $\{\ket{\phi_i^{(0,0)}}\}_{i=1}^s$, which  are not orthogonal to $\ket{S}$ but are orthogonal to all states in $\cU\setminus\{\ket{S}\}$. Then any  state in $\cH_{\cU}^{\bot}$ must be a linear combination of the missing states (with at least two nonzero coefficients) and is orthogonal to $\ket{S}$. Assume $\ket{\psi}=\sum_{i=1}^{s}a_i\ket{\phi_i^{(0,0)}}\in\cH_{\cU}^{\bot}$ is a product state. Let $M_i$ be the corresponding $0$-$1$ matrix associated with $\ket{\phi_i^{(0,0)}}$, whose nonzero entries form  tile $i$. Then $\ket{\psi}$  corresponds to a matrix  $M=\sum_{i=1}^{s}a_iM_i$, where entries with $a_i$ form the tile $i$. Since $\rank(M)=1$, then nonzero entries of $M$ must form a special rectangle of $\cT$. Without loss of generality, let the special rectangle $T=\cup_{i=1}^{k}t_i$ for $k\geq 2$. Since $\cT$ is a U-tile structure, then all nonzero entries of $M$, that is, those entries in $T$ are the same. This is a contradiction, since $M$ is assumed to be orthogonal to the all one matrix $J$.

Now we prove the necessity by contradiction.  If $\cT$ is not a U-tile structure, then there exists a special rectangle  $T=\cup_{i=1}^{k}t_i$ with $2\leq k\leq s$, and   $\{R_i\}_{i=1}^{k}$ or $\{C_i\}_{i=1}^{k}$ can be divided into two disjoint sets without intercrossing members.   Without loss of generality, we can assume $\{C_i\}_{i=1}^{k}=\{C_i\}_{i=1}^{k'}\cup\{C_j\}_{j=k'+1}^{k}$, where $C_i$ and $C_j$ are disjoint. So we can assume $\cup_{i=1}^{k'} C_i=\{0,1,\dots,\ell-1\}$, and $\cup_{j=k'+1}^{k}C_j=\{\ell,\ell+1,\dots,h-1\}$. Now we construct a state $\ket{\psi}=\sum_{i=1}^{k}a_i\ket{\phi_i^{(0,0)}}$, where $a_i=1$ for $1\leq i\leq k'$ and $a_j=-\frac{\ell}{h-\ell}$ for $k'+1\leq j\leq k$.  That is, $\ket{\psi}$ corresponds to a matrix $M$ with nonzero entries forming a submatrix
  	$$M'=\begin{pmatrix}
 &1      &\ldots   &1  &-\frac{\ell}{h-\ell}   &\ldots   &-\frac{\ell}{h-\ell}   \\
 &1      &\ldots   &1  &-\frac{\ell}{h-\ell}   &\ldots   &-\frac{\ell}{h-\ell}   \\
 &\vdots &         &\vdots  &-\frac{\ell}{h-\ell}   &         &-\frac{\ell}{h-\ell}   \\
 &1      &\ldots   &1  &-\frac{\ell}{h-\ell}   &\ldots   &-\frac{\ell}{h-\ell}   \\
 \end{pmatrix}.$$ Then $\rank(M)=1$ and $M$ is orthogonal to $J$. It means that we can find a product state $\ket{\psi}$ in $\cH_{\cU}^{\bot}$.
   Thus $\cU$ can be extended if $\cT$ is not a  U-tile structure.
\end{proof}

In Figure~\ref{Figure:non}, since the tile structure is not a U-tile structure, it does not correspond to a UPB by Theorem~\ref{Thm:U-tile}. In fact we can find a product state $\ket{\psi}=\ket{0}(\ket{0}+\ket{1})-\ket{0}(\ket{2}+\ket{3})\in\cH_{\cU}^{\bot}$, and $\ket{\psi}$ corresponds to the matrix
	$M=\begin{pmatrix}
&1 &1   &-1  &-1            \\
&0 &0   &0   &0   \\
&0 &0   &0   &0   \\
&0 &0   &0   &0       \\
\end{pmatrix}.$

%If there is a U-tile structure with $s$-tiles in $\bbC^m\otimes\bbC^n$, we can obtain the UPB  $\cU=\{\{\cB_i\setminus\ket{\phi_i^{(1)}}\ket{\varphi_i^{(1)}}\}_{i=1}^s\cup \ket{S}\}$  of size $(mn-s+1)$ in $\bbC^m\otimes\bbC^n$ by Theorem~\ref{Thm:U-tile}, by using the stopper state is  $\ket{S}=(\ket{0}+\ket{1}+\dots+\ket{m-1})(\ket{0}+\ket{1}+\dots+\ket{n-1})$ in any  $\bbC^m\otimes\bbC^n$.  Next, we construct UPBs in $\bbC^m\otimes\bbC^n$ by constructing U-tile structures in $\bbC^m\otimes\bbC^n$.
In \cite{halder2019family}, the authors gave a construction of a U-tile structure with $(2m-1)$-tiles in $\bbC^m\otimes\bbC^m$ when $m\geq 3$ is odd. They also proposed an open problem: whether this construction can be generalized for even-dimensional quantum systems? We give an affirmative answer to this question in Proposition~\ref{pro:generalization} by   constructing U-tile structures  for  arbitrary bipartite quantum systems.

\begin{proposition}\label{pro:generalization}
	There exists a UPB of size $(mn-4\fl{m-1}{2})$ in $\bbC^m\otimes\bbC^n$ for $3\leq m\leq n$.
\end{proposition}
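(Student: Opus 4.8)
The plan is to invoke Theorem~\ref{Thm:U-tile} and reduce the claim to a purely combinatorial construction. Since a U-tile structure with $s$ tiles yields a UPB of size $mn-s+1$, it suffices to exhibit, for every $3\le m\le n$, a U-tile structure in $\bbC^m\otimes\bbC^n$ with exactly $s=4\fl{m-1}{2}+1$ tiles. I would build such a structure as a \emph{nested pinwheel}: set $L=\fl{m-1}{2}$ and, for $\ell=1,\dots,L$, place four bars (one-row or one-column rectangles) around the frame of the sub-block $B_\ell$ occupying rows $\ell-1,\dots,m-\ell$ and columns $\ell-1,\dots,n-\ell$, arranged cyclically so that the top bar is $\{\ell-1\}\times\{\ell-1,\dots,n-\ell-1\}$, the right bar $\{\ell-1,\dots,m-\ell-1\}\times\{n-\ell\}$, the bottom bar $\{m-\ell\}\times\{\ell,\dots,n-\ell\}$, and the left bar $\{\ell,\dots,m-\ell\}\times\{\ell-1\}$; the leftover inner block $B_{\ell+1}$ then has two fewer rows and columns. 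After $L$ layers the central block $B_{L+1}$ is a single $1\times(n-m+1)$ tile (when $m$ is odd) or a single $2\times(n-m+2)$ tile (when $m$ is even), which I take as the last tile. This gives $4L+1=4\fl{m-1}{2}+1$ genuine rectangular tiles paving $\cT$, so the size count is immediate once the U-tile property is verified.

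The verification splits into two parts, following the reformulation in the remark after Definition~\ref{df:utile}: for every special rectangle $T$ I must check that neither the row-index sets nor the column-index sets of its constituent tiles can be split into two mutually disjoint families. First I would prove the structural lemma that the only special rectangles of the nested pinwheel are the concentric blocks $B_1\supset B_2\supset\cdots\supset B_L$, each being the union of all tiles of layers $\ell,\ell+1,\dots$ together with the center. The argument is by induction on $L$: if a union of whole tiles forms a sub-rectangle and contains any bar of the outermost layer, then the all-cells-present condition forces it, through the cyclic overlaps of the four bars (the right and left bars nearly span all rows, the top and bottom bars nearly span all columns), to contain every tile and hence equal $B_1$; otherwise it uses no outer-layer tile, is contained in $B_2$, and one recurses on the smaller nested pinwheel. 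Because this argument uses only tile-membership and filling, it also covers the \emph{separated} rectangles permitted by the definition.

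Second, for each concentric block $B_\ell$ I would check connectivity directly: its right bar spans rows $\ell-1,\dots,m-\ell-1$ and its left bar spans rows $\ell,\dots,m-\ell$, so the two overlap and jointly meet the row sets of the top bar, the bottom bar, and every inner tile, making the row-intersection graph connected; the top and bottom bars play the symmetric role for columns. Thus no $B_\ell$ admits a splitting, so $\cT$ is a U-tile structure, and Theorem~\ref{Thm:U-tile} delivers the desired UPB of size $mn-4\fl{m-1}{2}$.

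I expect the main obstacle to be the structural lemma of the second paragraph, namely showing that no \emph{partial}, and hence potentially splittable, set of tiles can accidentally close up into a sub-rectangle. The cyclic offset of the four bars in each layer is exactly what rules this out, but making the induction airtight requires carefully tracking the boundary rows and columns in both the odd-$m$ and even-$m$ cases across the full range $n\ge m$, and handling the degenerate small central blocks ($1\times(n-m+1)$ or $2\times(n-m+2)$) as the base case.
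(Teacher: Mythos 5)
Your construction is exactly the one the paper uses: the nested "pinwheel" of four cyclically offset bars per layer is precisely the tile structure drawn in the paper's Figures~\ref{fig:2m-3} and~\ref{fig:2m-1} ($2m-3$ tiles for even $m$, $2m-1$ for odd $m$, i.e.\ $4\fl{m-1}{2}+1$ tiles in both cases), followed by the same appeal to Theorem~\ref{Thm:U-tile}. The paper simply exhibits the figures without verifying the U-tile property, so your sketch of that verification (identifying the concentric blocks as the only special rectangles and checking row/column connectivity via the overlapping bars) is more detailed than, but entirely consistent with, the published argument.
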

\begin{proof}
	When $m$ is even, we can construct a U-tile structure with $(2m-3)$-tiles in Figure \ref{fig:2m-3}. 	When $m$ is odd, we can construct a U-tile structure with $(2m-1)$-tiles in Figure \ref{fig:2m-1}.
	Thus we can construct a UPB of size $(mn-4\fl{m-1}{2})$ in $\bbC^m\otimes\bbC^n$ for $3\leq m\leq n$ by Theorem~\ref{Thm:U-tile}.
\end{proof}
	{\small

		\begin{figure}[htb]
		\centering
		\begin{tikzpicture}
		[%%%%%%%%%%%%%%%%%%%%%%%%%%%%%%
		box/.style={rectangle,draw=black,thick, minimum size=1cm},
		]%%%%%%%%%%%%%%%%%%%%%%%%%%%%%%
		
		\foreach \x in {0,1,...,6}{
			\foreach \y in {0,1,...,5}
			\node[box] at (\x,\y){};
		}
		\node[box]  at (0,5){1};
		\node[box]  at (1,5){1};
		\node[box]  at (2,5){1};
		\node[box]  at (3,5){\ldots};
		\node[box]  at (4,5){1};
		\node[box]  at (5,5){1};
		\node[box]  at (6,5){2};
		\node[box]  at (6,4){2};
		\node[box]  at (6,3){\vdots};
		\node[box]  at (6,2){2};
		\node[box]  at (6,1){2};
		\node[box]  at (6,0){3};
		\node[box]  at (5,0){3};
		\node[box]  at (4,0){3};
		\node[box]  at (3,0){\ldots};
		\node[box]  at (2,0){3};
		\node[box]  at (1,0){3};
		\node[box]  at (0,0){4};
		\node[box]  at (0,1){4};
		\node[box]  at (0,2){\vdots};
		\node[box]  at (0,3){4};
		\node[box]  at (0,4){4};
		\node[box]  at (1,4){\ldots};
		\node[box]  at (2,4){\ldots};
		\node[box]  at (3,4){\ldots};
		\node[box]  at (4,4){\ldots};
		\node[box]  at (5,4){\vdots};
		\node[box]  at (5,3){\vdots};
		\node[box]  at (5,2){\vdots};
		\node[box]  at (5,1){\ldots};
		\node[box]  at (4,1){\ldots};
		\node[box]  at (3,1){\ldots};
		\node[box]  at (2,1){\ldots};
		\node[box]  at (1,1){\vdots};
		\node[box]  at (1,2){\vdots};
		\node[box]  at (1,3){\vdots};
		\node[box]  at (2,3){2$m$-3};
		\node[box]  at (3,3){\ldots};
		\node[box]  at (4,3){2$m$-3};
		\node[box]  at (4,2){2$m$-3};
		\node[box]  at (3,2){\ldots};
		\node[box]  at (2,2){2$m$-3};
		\draw (-1,5) node {0};
		\draw (-1,4) node {1};
		\draw (-1,3) node {2};							
		\draw (-1,2) node {\vdots};
		\draw (-1,1) node {$m$-2};
		\draw (-1,0) node {$m$-1};
		\draw (0,6) node {0};
		\draw (1,6) node {1};
		\draw (2,6) node {2};
		\draw (3,6) node {\ldots};
		\draw (4,6) node {$n$-3};
		\draw (5,6) node {$n$-2};
		\draw (6,6) node {$n$-1};
		\end{tikzpicture}
		\caption{A U-tile structure with (2$m$-3)-tiles in $\bbC^m\otimes \bbC^{n}$ when $m$ is even.}
	\label{fig:2m-3}	
	\end{figure}
}
	
		\begin{widetext}

	\begin{figure}[htb]
		\centering
		\begin{tikzpicture}
		[%%%%%%%%%%%%%%%%%%%%%%%%%%%%%%
		box/.style={rectangle,draw=black,thick, minimum size=1cm},
		]%%%%%%%%%%%%%%%%%%%%%%%%%%%%%%
		
		\foreach \x in {0,1,...,8}{
			\foreach \y in {0,1,...,6}
			\node[box] at (\x,\y){};
		}
		\node[box]  at (0,6){1};
		\node[box]  at (1,6){1};
		\node[box]  at (2,6){1};
		\node[box]  at (3,6){1};
		\node[box]  at (4,6){\ldots};
		\node[box]  at (5,6){1};
		\node[box]  at (6,6){1};
		\node[box]  at (7,6){1};
		\node[box]  at (8,6){2};
		\node[box]  at (8,5){2};
		\node[box]  at (8,4){2};
		\node[box]  at (8,3){\vdots};
		\node[box]  at (8,2){2};
		\node[box]  at (8,1){2};
		\node[box]  at (8,0){3};
		\node[box]  at (7,0){3};
		\node[box]  at (6,0){3};
		\node[box]  at (5,0){3};
		\node[box]  at (4,0){\ldots};
		\node[box]  at (3,0){3};
		\node[box]  at (2,0){3};
		\node[box]  at (1,0){3};
		\node[box]  at (0,0){4};
		\node[box]  at (0,1){4};
		\node[box]  at (0,2){4};
		\node[box]  at (0,3){\vdots};		
		\node[box]  at (0,4){4};
		\node[box]  at (0,5){4};	
		\node[box]  at (1,5){\dots};
		\node[box]  at (2,5){\dots};		
		\node[box]  at (3,5){\dots};		
		\node[box]  at (4,5){\dots};		
		\node[box]  at (5,5){\dots};		
		\node[box]  at (6,5){\dots};		
		\node[box]  at (7,5){\vdots};
		\node[box]  at (7,4){\vdots};		
		\node[box]  at (7,3){\vdots};		
		\node[box]  at (7,2){\vdots};		
		\node[box]  at (7,1){\ldots};
		\node[box]  at (6,1){\ldots};		
		\node[box]  at (5,1){\ldots};		
		\node[box]  at (4,1){\ldots};		
		\node[box]  at (3,1){\ldots};		
		\node[box]  at (2,1){\ldots};		
		\node[box]  at (1,1){\vdots};
		\node[box]  at (1,2){\vdots};	
		\node[box]  at (1,3){\vdots};
		\node[box]  at (1,4){\vdots};	
		\node[box]  at (2,4){2$m$-5};
		\node[box]  at (3,4){\ldots};
		\node[box]  at (4,4){2$m$-5};	
		\node[box]  at (5,4){2$m$-5};
		\node[box]  at (6,4){2$m$-4};		
		\node[box]  at (6,3){2$m$-4};	
		\node[box]  at (6,2){2$m$-3};	
		\node[box]  at (5,2){2$m$-3};	
		\node[box]  at (4,2){\ldots};	
		\node[box]  at (3,2){2$m$-3};	
		\node[box]  at (2,2){2$m$-2};	
		\node[box]  at (2,3){2$m$-2};	
		\node[box]  at (3,3){2$m$-1};
		\node[box]  at (4,3){\ldots};
		\node[box]  at (5,3){2$m$-1};	
		\draw (-1,6) node {0};
		\draw (-1,5) node {1};
		\draw (-1,4) node {2};							
		\draw (-1,3) node {\vdots};
		\draw (-1,2) node {$m$-3};
		\draw (-1,1) node {$m$-2};
		\draw (-1,0) node {$m$-1};
		\draw (0,7) node {0};
		\draw (1,7) node {1};
		\draw (2,7) node {2};
		\draw (3,7) node {3};
		\draw (4,7) node {\ldots};
		\draw (5,7) node {$n$-4};
		\draw (6,7) node {$n$-3};
		\draw (7,7) node {$n$-2};
		\draw (8,7) node {$n$-1};
		\end{tikzpicture}
		\caption{A U-tile structure with $(2m-1)$-tiles in $\bbC^m\otimes \bbC^{n}$ when $m$ is odd.}
		\label{fig:2m-1}
	\end{figure}

\end{widetext}

\begin{proposition}\label{pro:upb42m-1}
		There is a UPB of size $(mn-k)$  in $\bbC^m\otimes\bbC^n$ for $ 4\leq k\leq 2m-1$ and  $4\leq m\leq n$. $F(m,n)=mn-4$ for $3\leq m\leq n$, where $F(m,n)$ is the maximum size of UPBs in $\bbC^m\otimes\bbC^n$.
\end{proposition}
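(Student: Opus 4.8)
The statement has two parts: (a) the existence of a UPB of every size $mn-k$ with $4\le k\le 2m-1$ (for $4\le m\le n$), and (b) the determination of the maximum size $F(m,n)=mn-4$ for $3\le m\le n$. The plan is to get (a) together with the lower bound $F(m,n)\ge mn-4$ by explicit construction through Theorem~\ref{Thm:U-tile}, and to get the upper bound $F(m,n)\le mn-4$ by a PPT/separability argument.

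For (a): By Theorem~\ref{Thm:U-tile} a U-tile structure with $s$ tiles yields a UPB of size $mn-s+1$, so producing UPBs of size $mn-k$ for $4\le k\le 2m-1$ is equivalent to exhibiting U-tile structures with $s=k+1$ tiles for every $s$ in the range $5\le s\le 2m$. First I would take the pinwheel-type structures of Figures~\ref{fig:2m-3}--\ref{fig:2m-1} (which realize $s=2m-3$ and $s=2m-1$) as anchors, and then produce the remaining tile counts by local surgery on the interior: merging two interior tiles into one to decrease $s$, or splitting one interior tile into two to increase $s$, each step changing $s$ by one while leaving the boundary frame intact. After each modification I would verify the U-tile property straight from Definition~\ref{df:utile}, using the criterion noted after it: check that neither $\{R_i\}$ nor $\{C_i\}$ of any special rectangle can be split into two mutually disjoint parts. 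This construction simultaneously supplies the $s=5$ structure needed for the lower bound of (b) when $m\ge 4$; for $m=3$ the size-$(mn-4)$ UPB already comes from Proposition~\ref{pro:generalization}.

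For the upper bound $F(m,n)\le mn-4$: Suppose $S=\{\ket{\psi_i}\}_{i=1}^{t}$ is a UPB whose complementary space $\cH_S^{\perp}$ has dimension $d=mn-t$. As recalled in the introduction, the state $\rho=\frac{1}{mn-t}\left(I-\sum_{i}\proj{\psi_i}\right)$ is the normalized projector onto $\cH_S^{\perp}$, so $\rho$ has rank exactly $d$. The key point is that $\rho$ is PPT: writing $\ket{\psi_i}=\ket{a_i}\ket{b_i}$, one has $\proj{\psi_i}^{T_B}=\proj{\psi_i'}$ with $\ket{\psi_i'}=\ket{a_i}\ket{\bar b_i}$, and $\{\ket{\psi_i'}\}$ is again an orthonormal product set, whence $\rho^{T_B}=\frac{1}{mn-t}\left(I-\sum_{i}\proj{\psi_i'}\right)\succeq 0$. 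Now if $d\le 3$, I would invoke the known fact that every PPT state of rank at most $3$ is separable (the minimal rank of a PPT entangled state is $4$). Then $\rho$ is separable, so its range $\cH_S^{\perp}$ is spanned by product states and in particular contains one, contradicting unextendibility. Hence $d\ge 4$, i.e. every UPB has size at most $mn-4$; combined with the construction this gives $F(m,n)=mn-4$.

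The main obstacle is part (a): one must fix a single flexible tile family realizing every tile count in $[5,2m]$ and prove that it remains a U-tile structure after each surgery. The indecomposability of all special rectangles is exactly the hypothesis Theorem~\ref{Thm:U-tile} requires, and this combinatorial check — verifying via Definition~\ref{df:utile} that no special rectangle created by merging or splitting admits a row- or column-index splitting — is the delicate step. By contrast the upper bound is short once the low-rank PPT separability theorem is imported; the only things to check carefully there are that $\rho$ has rank exactly $d=mn-t$ and that it is PPT, both immediate from the orthonormal product structure of the UPB.
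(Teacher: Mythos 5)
Your upper-bound argument for $F(m,n)\le mn-4$ is correct and is actually more self-contained than the paper's, which at this point simply cites \cite{chen2013separability,feng2006unextendible} for the nonexistence of UPBs of size $mn-1$, $mn-2$, $mn-3$; your PPT-plus-low-rank-separability derivation is essentially the standard proof behind those citations. The genuine gap is in part (a). The ``local surgery'' you propose cannot work, for a reason built into Definition~\ref{df:utile}. If you split a tile $t$ of a U-tile structure into two rectangular tiles $t_1,t_2$, then $t_1\cup t_2=t$ is a special rectangle partitioned into two tiles, so the resulting structure is \emph{never} a U-tile structure --- this is exactly the failure mode of Figure~\ref{Figure:non}, where $t_1\cup t_2$ is the top row. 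Conversely, merging two tiles into one requires their union to be a rectangle (tiles must be rectangles), but then in the structure you started from $t_1\cup t_2$ was already a special rectangle partitioned into two tiles, contradicting the assumption that you started from a U-tile structure. So neither move is available: there is no one-step $s\mapsto s\pm1$ merge/split surgery inside the class of U-tile structures, and the combinatorial check you defer as ``delicate'' in fact provably fails for every instance of your scheme.

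What is needed, and what the paper does, is a family of genuinely different structures rather than local perturbations of one. For $m=4$ it exhibits explicit U-tile structures with $5,6,7,8$ tiles; note that passing from $5$ to $6$ tiles there is not a split but a redistribution in which two existing tiles each donate a cell to form a new \emph{separated} tile occupying nonadjacent cells of one column. For larger $m$ it proceeds by induction on $m$: each $t$-tile structure in $\bbC^{m-1}\otimes\bbC^{m-1}$ is grown by appending a row and a column whose cells are absorbed into existing tiles (preserving the tile count and, after a check, the U-tile property), the two new values $t=2m-1,2m$ are obtained by a separate modification of the $2(m-1)$-tile structure one dimension down, and finally $n>m$ is reached by duplicating the last column $n-m$ times. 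If you want to salvage your outline, you must replace merge/split by operations of this kind --- ones that create or destroy separated tiles using cells from several existing tiles --- and then reverify the indecomposability of all special rectangles, which is no longer a local check. Your handling of the $m=3$ endpoint of part (b) via Proposition~\ref{pro:generalization} is fine and matches the paper's $5$-tile construction.
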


The proof of Proposition \ref{pro:upb42m-1} is given in Appendix~\ref{app:proof=prop5}.  In Proposition~\ref{pro:upb42m-1}, we also give a construction of a U-tile structure with $(2m-1)$-tiles in $\bbC^m\otimes\bbC^m$ when $m$ is odd. But this U-tile structure has only one special rectangle, which is different from the U-tile structure in \cite{halder2019family} that has at least two special rectangles when $m \geq 5$.

UPBs can be used to construct PPT entangled states \cite{bennett1999unextendible}.  From Proposition~\ref{pro:upb42m-1}, we can construct a UPB $\{\ket{\psi_i}\}_{i=1}^{mn-k}$ for $ 4\leq k\leq 2m-1$ and $4\leq m\leq n$, then  $\rho=I-\sum_{i=1}^{mn-k}\ket{\psi_i}\bra{\psi_i}$ is a rank-$k$ PPT entangled state. The bipartite state is either separable or entangled. Determining whether a state is entangled is an NP-hard problem, namely the separability problem. It has been shown that on the bipartite Hilbert space $\bbC^2\otimes\bbC^2$ and $\bbC^2\otimes\bbC^3$, the state $\rho$ is separable if and only if it is a positive-partial-transpose (PPT) state \cite{peres1996,horodecki2001separability}.  For the systems of high dimensions, there exist PPT entangled states \cite{horodecki1997}. PPT entangled states represent the so-called bound entangled states from which no pure entanglement can be distilled under LOCC  \cite{horodecki1997,horodecki1999bound}. It is also related to the long-standing conjecture wondering whether there exists a negative-partial-transpose bound entangled state \cite{horodecki2020five}. Therefore, our construction of PPT entangled states shows novel understanding of these problems.

%In recent years, quantum entanglement has been widely investigated in terms of secret keys \cite{das2019universal}, measurement-based quantum computation \cite{Briegel2009Measurement,RaussendorfA}, the generalized geometric measure \cite{roy2019computable}, entanglement witnesses  \cite{sun2019improved,HuberWitnessing,SperlingMultipartite,CoffmanDistributed},  concurrence  \cite{MaMeasure,ChenImproved,Hong2012Measure,Gao2014On}, Bell inequalities \cite{BancalDevice}, as well as semidefinite programs \cite{Lancien_2015}. The bipartite state is either separable or entangled. Determining whether a state is entangled is an NP-hard problem, namely the separability problem. It has been shown on the bipartite Hilbert space $\bbC^2\otimes\bbC^2$ and $\bbC^2\otimes\bbC^3$, the state $\rho$ is separable if and only if it is a positive-partial-transpose (PPT) state \cite{peres1996,horodecki2001separability}.  For the systems of high dimensions, there exist PPT entangled states \cite{horodecki1997}. PPT entangled states represent the so-called bound entangled states from which no pure entanglement can be distilled under LOCC  \cite{horodecki1997,horodecki1999bound}. It is also related to the long-standing conjecture wondering that there exists a negative- partial-transpose bound entangled state \cite{horodecki2020five}.

\section{Application: Local distinguishability of  UPBs by entanglement resource}
\label{sec:app}

In this section, we provide a method of locally distinguishing UPBs constructed in Proposition~\ref{pro:generalization}  assisted by entanglement, because UPBs cannot be distinguished perfectly by LOCC alone \cite{cohen2008understanding}.  When $m=n\geq 3$ are odd, the UPB constructed from Proposition~\ref{pro:generalization} can be perfectly distinguished by LOCC with a $\lceil\frac{m}{2}\rceil\otimes\lceil\frac{m}{2}\rceil$ maximally entangled state \cite{zhang2020locally}. We will prove that any UPB from Proposition~\ref{pro:generalization} can be perfectly distinguished by LOCC with a $\lceil\frac{m}{2}\rceil\otimes \lceil\frac{m}{2}\rceil$  maximally entangled state in Theorem~\ref{Thm:mevendis}.

We begin by showing the special case $4=m\leq n$ in Lemma~\ref{lem:disti44}. For this purpose, we demonstrate the  UPB of size $4n-4$ in $\bbC^4\otimes\bbC^n$ constructed by the U-tile structure in Proposition~\ref{pro:generalization} as follows.
\begin{align*}%\label{Eq:12sizeUPB44}
\notag
\ket{\psi_i}=&\ket{0}\left(\sum_{j=0}^{n-2}w_{n-1}^{ij}\ket{j}\right), 1\leq i\leq n-2,\\
\notag
\ket{\psi_{i+n-2}}=&\left(\sum_{j=0}^{2}w_{3}^{ij}\ket{j}\right)\ket{n-1}, 1\leq i\leq 2,\\
\notag
\ket{\psi_{i+n}}=&\ket{3}\left(\sum_{j=1}^{n-1}w_{n-1}^{ij}\ket{j}\right), 1\leq i\leq n-2,\\
\notag
\ket{\psi_{i+2n-2}}=&\left(\sum_{j=1}^{3}w_{3}^{ij}\ket{j}\right)\ket{0}, 1\leq i\leq 2,
\end{align*}
\begin{align}\label{Eq:12sizeUPB44}
\notag
\ket{\psi_{i+2n}}=&\left(\ket{1}+\ket{2}\right)\left(\sum_{j=1}^{n-2}w_{n-2}^{ij}\ket{j}\right), 1\leq i\leq n-3,\\
\notag
\ket{\psi_{i+3n-2}}=&\left(\ket{1}-\ket{2}\right)\left(\sum_{j=1}^{n-2}w_{n-2}^{ij}\ket{j}\right), 0\leq i\leq n-3,\\
\notag
\ket{S}=&\left(\ket{0}+\ket{1}+\ket{2}+\ket{3}\right)\\
&\left(\ket{0}+\ket{1}+\ldots+\ket{n-1}\right).
\end{align}
We show that the above states can be perfectly
distinguished by LOCC  assisted with entanglement.

\begin{lemma}\label{lem:disti44}
	The  UPB of Eqs. (\ref{Eq:12sizeUPB44}) can be perfectly
	distinguished by LOCC with a $2\otimes2$ maximally entangled
	state.
\end{lemma}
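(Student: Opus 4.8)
The plan is to exhibit an explicit $\locc$ protocol which, using the single shared $2\otimes2$ maximally entangled state, sorts each received state into the tile of the U-tile structure of Figure~\ref{fig:2m-3} (specialized to $m=4$) that produced it, \emph{without} disturbing the discrete Fourier phase that labels the state inside that tile. This reduction is the backbone of the argument. The five tiles are $t_1=\{0\}\times\{0,\dots,n-2\}$, $t_2=\{0,1,2\}\times\{n-1\}$, $t_3=\{3\}\times\{1,\dots,n-1\}$, $t_4=\{1,2,3\}\times\{0\}$ and $t_5=\{1,2\}\times\{1,\dots,n-2\}$, together with the stopper $\ket S$; inside each tile the states of Eqs.~(\ref{Eq:12sizeUPB44}) are mutually orthogonal Fourier vectors supported on a \emph{single} party --- Bob for $t_1,t_3,t_5$ and Alice for $t_2,t_4$. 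Hence, once the tile is known, the party carrying the Fourier structure measures in the corresponding Fourier basis and broadcasts the index, so the whole task collapses to a \emph{coherent tile identification}, which the nonlocality-without-entanglement of a UPB forbids to plain $\locc$ \cite{cohen2008understanding}.

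Second, I would locate the missing distinguishing power in Alice's two-dimensional inner subspace $\lin\{\ket1,\ket2\}$ and supply it with exactly one ebit, specializing the entanglement-assisted scheme of \cite{cohen2008understanding,zhang2020locally}. Grouping the four rows into the pairs $\{0,3\}$ and $\{1,2\}$ gives an identification $\bbC^4_A\cong\bbC^2\otimes\bbC^2$ into a ``which-pair'' and a ``position-in-pair'' qubit, and I would use the $2\otimes2$ state to teleport the position qubit to Bob. After this step Alice keeps only the which-pair qubit while Bob holds an enlarged $2\times n$ register, and the states of $t_1,t_3,t_5$ and $\ket S$ all become \emph{product} across the new cut, so Bob reads off their tile and Fourier index by a single measurement on his register while Alice confirms the pair, the exchange consuming precisely one ebit as claimed. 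A determinism check --- that the all-equal-phase outcome isolates $\ket S$ and that no tile branch overlaps another --- is then routine orthogonality bookkeeping, using that each $t_j$-vector is orthogonal to the all-ones stopper vector on its Fourier side.

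The hard part, and the step I expect to be the genuine obstacle, is the two three-row Fourier tiles $t_2$ and $t_4$. A Fourier vector over the three rows $\{0,1,2\}$ or $\{1,2,3\}$ is entangled across \emph{every} qubit splitting of the four rows, so it survives the teleportation above as a Schmidt-rank-$2$ state rather than a product state, and neither a projective inner/outer test by Alice nor a column test by Bob can peel it off without collapsing the two states of the tile onto a common vector and erasing their label. To resolve this I would exploit that $t_2$ and $t_4$ sit on the \emph{disjoint, isolated} Bob-columns $\ket{n-1}$ and $\ket0$ and on disjoint row-triples, and design a dedicated sub-protocol that, within each such branch, reduces to distinguishing only the \emph{two} mutually orthogonal states of a single tile --- a pair that is always $\locc$-distinguishable \cite{walgate2000local}. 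Verifying that this sub-protocol interleaves with the main reduction while still consuming no more than the one ebit, and that it keeps the global discrimination deterministic, is the delicate and longest part of the proof; the resulting protocol is the $m=4$ instance of the general construction to be used in Theorem~\ref{Thm:mevendis}.
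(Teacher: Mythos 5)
Your use of the single ebit is genuinely different from the paper's: you spend it to teleport a ``position-in-pair'' qubit to Bob, whereas the paper never teleports anything. There, Alice's first measurement $A_1/A_2$ acts jointly on her system and her half of $\ket{00}+\ket{11}$ and sends $\left(\sum_r c_r\ket{r}\right)(\ket{00}+\ket{11})_{ab}$ to $\sum_{r=0}^{2}c_r\ket{r}\ket{00}_{ab}+c_3\ket{3}\ket{11}_{ab}$: it coherently tags, without collapsing any superposition, whether Alice's row lies in $\{0,1,2\}$ or equals $3$, and Bob's copy $b$ of that tag is precisely what lets him identify and remove the row-$3$ tile $t_3$ and the stopper first, via $B_i=\bigl(\sum_j w_{n-1}^{ij}\ket{j}\bigr)\bigl(\sum_j w_{n-1}^{ij}\bra{j}\bigr)\ox\ketbra{1}{1}_b$, before any column projection is made. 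Your teleportation picture does make $t_1,t_3,t_5,\ket{S}$ product across the new cut, but even there the claim that ``Bob reads off their tile and Fourier index by a single measurement'' overreaches: the Bob-parts of, say, a $t_1$ vector $\ket{0}_{B''}\ox\sum_{j=0}^{n-2}w_{n-1}^{ij}\ket{j}$ and a $t_5$ vector $(\ket{0}+\ket{1})_{B''}\ox\sum_{j=1}^{n-2}w_{n-2}^{i'j}\ket{j}$ are not orthogonal; the orthogonality still lives on Alice's side, so an interactive, carefully ordered tree is needed even for the ``easy'' states.

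The genuine gap is the deferred sub-protocol for $t_2$ and $t_4$, which is where the entire content of the lemma sits, and the route you sketch runs into concrete obstructions. To ``reduce to a single tile'' for $t_2$ Bob would project his column register onto $\ket{n-1}$; but $t_3$ and $\ket{S}$ also have support on column $n-1$, and that projection collapses all $n-2$ states of $t_3$ onto one vector (up to phase), destroying their labels --- so $t_3$ and $\ket{S}$ must be identified first. Identifying $t_3$ requires Bob to hold a ``row $3$ versus rows $0,1,2$'' flag; with your pairing $\{0,3\},\{1,2\}$ the teleported qubit only distinguishes rows $\{0,1\}$ from $\{2,3\}$, and conditioning on it also catches the row-$2$ components of $t_4,t_5,\ket{S}$, in particular projecting $(\ket{1}\pm\ket{2})\ox f_i$ onto $\pm\ket{2}\ox f_i$ and erasing the sign label of $t_5$; no single teleported qubit can encode a $1$-versus-$3$ split of the four rows. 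Finally, after teleportation the two states of $t_2$ differ only in the relative phase between $\ket{P_0}\ox\ket{0}$ and $\ket{P_1}\ox(w_3^{i}\ket{0}+w_3^{2i}\ket{1})$, an Alice--Bob coherence: a projective measurement by Alice in the pair basis, or by Bob in the computational basis of the received qubit, maps the two states to non-orthogonal or identical vectors. Walgate et al.\ applies only once the branch containing exactly these two states has been isolated, and isolating it without destroying either these coherences or the other tiles' labels is the scheduling problem the paper's coherent-tagging decision tree is built to solve. As written, the proposal identifies the difficulty but does not prove the lemma.
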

\begin{proof}
Let Alice and Bob share a $2\otimes2$ maximally entangled
state $\ket{\psi}_{ab}=\ket{00}+\ket{11}$. Let $\ket{\psi_i'}=\ket{\psi_i}_{AB}\ket{\psi}_{ab}$ for $1\leq i\leq 4n-5$ and $\ket{S'}=\ket{S}_{AB}\ket{\psi}_{ab}$. Then  Alice performs a two-outcome measurement on each  of the $4n-4$ states $\ket{\psi_i'}$ and $\ket{S'}$, each outcome corresponding to a rank-4 projector:
\begin{align*}
&A_1=\ket{00}_{Aa}\bra{00}+\ket{10}_{Aa}\bra{10}+\ket{20}_{Aa}\bra{20}+\ket{31}_{Aa}\bra{31};\\
&A_2=\ket{01}_{Aa}\bra{01}+\ket{11}_{Aa}\bra{11}+\ket{21}_{Aa}\bra{21}+\ket{30}_{Aa}\bra{30}.
\end{align*}
For operating with  $A_1$  on systems $Aa$, each of the initial states is transformed into:
\begin{align}\label{Eq:A1acts}
\notag
\ket{\phi_i}=&\ket{\psi_i}\ket{00}, 1\leq i\leq n \ \text{and} \ 2n+1\leq i\leq 4n-5,\\
\notag
\ket{\phi_i}=&\ket{\psi_i}\ket{11}, n+1\leq i \leq 2n-2,\\
\notag
\ket{\phi_i}=&\left(\sum_{j=1}^{2}w_3^{ij}\ket{j}\right)\ket{0}\ket{00}+w_3^{3i}\ket{3}\ket{0}\ket{11}, i=2n-1,2n,\\
\notag
\ket{S}\rightarrow&\left(\ket{0}+\ket{1}+\ket{2}\right)\left(\ket{0}+\ket{1}+\ldots+\ket{n-1}\right)\ket{00}\\
&+\ket{3}\left(\ket{0}+\ket{1}+\ldots+\ket{n-1}\right)\ket{11}.
\end{align}  	
We only need to consider the operator $A_1$, since operating $A_2$  on systems $Aa$  generates new states which differ from the states in Eqs. (\ref{Eq:A1acts}) only by ancillary systems $\ket{00}_{ab}\rightarrow\ket{11}_{ab}$ and $ \ket{11}_{ab}\rightarrow\ket{00}_{ab}$.

Now, we show the local distinguishability of the states in Eqs. (\ref{Eq:A1acts}).  Bob makes an $(n+1)$-outcome projective measurement, where the first $n-1$ projectors are $B_i=\left(\sum_{j=1}^{n-1}w_{n-1}^{ij}\ket{j}\right)_B\left(\sum_{j=1}^{n-1}w_{n-1}^{ij}\bra{j}\right)\otimes\ket{1}_b\bra{1}$, $1\leq i\leq n-1$. For each $B_i$, $1\leq i\leq n-2$,  the only remaining possibility
is $\ket{\phi_{i+n}}$,  which has thus been successfully identified. In the same way,  Bob can identify  $\ket{S}$ by $B_{n-1}$.

Then Bob uses the $n$th projector $B_n=\ket{n-1}_B\bra{n-1}\otimes\ket{0}_b\bra{0}$. It leaves $\ket{\phi_{i}}$, $i=n-1,n$, and $\ket{S}\rightarrow (\ket{0}+\ket{1}+\ket{2})\ket{n-1}\ket{00}$.  Now Bob has the
same state in his own party and Alice has orthogonal states. Thus, Alice can
distinguish these states.

Bob's last outcome is a projector  $B_{n+1}=I-B_1-B_2-\ldots-B_n$. It leaves $\ket{\phi_{i}}$, $1\leq i\leq n-2$ and $2n-1\leq i\leq 4n-5$, and $\ket{S}\rightarrow(\ket{0}+\ket{1}+\ket{2})\left(\sum_{j=0}^{n-2}\ket{j}\right)\ket{00}+\ket{3}\ket{0}\ket{11}$.
Then, Alice uses the projector $A_{n+1,1}=\ket{0}_A\bra{0}\otimes\ket{0}_a\bra{0}$,  leaving $\ket{\phi_{i}}$, $1\leq i\leq n-2$ and $\ket{S}\rightarrow\ket{0}\left(\sum_{j=0}^{n-2}\ket{j}\right)\ket{00}$, which can be easily
distinguished by Bob. When Alice uses the projector $A_{n+1,2}=I-A_{n+1,1}$, it leaves  $\ket{\phi_{i}}$, $2n-1\leq i\leq 4n-5$, and $\ket{S}\rightarrow(\ket{1}+\ket{2})\left(\sum_{j=0}^{n-2}\ket{j}\right)\ket{00}+\ket{3}\ket{0}\ket{11}$.
 Then, Bob uses $B_{n+1,2,1}=\ket{0}_B\bra{0}\otimes(\ket{0}_b\bra{0}+\ket{1}_b\bra{1})$, leaving $\ket{\phi_{i}}$, $i=2n-1,2n$, and $\ket{S}\rightarrow (\ket{1}+\ket{2})\ket{0}\ket{00}+\ket{3}\ket{0}\ket{11}$.  Bob makes a projective measurement on system $b$ by projecting $\ket{0}_b+\ket{1}_b$ and  $\ket{0}_b-\ket{1}_b$. Then every projector can get the same state in Bob's party and Alice has orthogonal states. Thus, Alice can distinguish these states. When Bob uses projector $B_{n+1,2,2}=I-B_{n+1,2,1}$, it leaves $\ket{\phi_{i}}$, $2n+1\leq i\leq 4n-5$ and $\ket{S''}= (\ket{1}+\ket{2})(\ket{1}+\ket{2}+\ldots+\ket{n-2})\ket{00}$. Alice uses projector $A_{n+1,2,2,1}=(\ket{1}+\ket{2})_A(\bra{1}+\bra{2})$,  leaving $\ket{\phi_{i}}$, $2n+1\leq i\leq 3n-3$, and $\ket{S''}$, which can be easily distinguished by Bob. Then Alice uses projector $A_{n+1,2,2,2}=I-A_{n+1,2,2,1}$, which leaves $\ket{\phi_{i}}$, $3n-2\leq i\leq 4n-5$. But Bob can easily distinguish these states.

Thus,  the states  in Eqs. (\ref{Eq:12sizeUPB44}) can be perfectly distinguished by  LOCC with a $2\otimes2$ maximally entangled
state through our protocol.
\end{proof}

Next, we consider the general UPBs in Proposition~\ref{pro:generalization}. When $4\leq m\leq n$ and $m$ is even, we can construct an UPB of size $mn-2m+4$ in $\bbC^m\otimes\bbC^n$ using the U-tile structure in Proposition~\ref{pro:generalization} as follows. For convenience, denote   $\iota\triangleq \frac{m}{2}$.
\begin{align*}%\label{Eq:2m3sizeUPBmm}
\notag
\ket{\psi_i}=&\ket{0}\left(\sum_{j=0}^{n-2}w_{n-1}^{ij}\ket{j}\right), 1\leq i\leq n-2,\\
\notag
\ket{\psi_{i+n-2}}=&\left(\sum_{j=0}^{m-2}w_{m-1}^{ij}\ket{j}\right)\ket{n-1}, 1\leq i\leq m-2,\\
\notag
\ket{\psi_{i+m+n-4}}=&\ket{m-1}\left(\sum_{j=1}^{n-1}w_{n-1}^{ij}\ket{j}\right), 1\leq i\leq n-2,\\
\notag
\ket{\psi_{i+m+2n-6}}=&\left(\sum_{j=1}^{m-1}w_{m-1}^{ij}\ket{j}\right)\ket{0}, 1\leq i\leq m-2,\\
\notag
\ket{\psi_{i+2m+2n-8}}=&\ket{1}\left(\sum_{j=1}^{n-3}w_{n-3}^{ij}\ket{j}\right), 1\leq i\leq n-4,\\
\notag
\ket{\psi_{i+2m+3n-12}}=&\left(\sum_{j=1}^{m-3}w_{m-3}^{ij}\ket{j}\right)\ket{n-2}, 1\leq i\leq m-4,\\
\notag
\ket{\psi_{i+3m+3n-16}}=&\ket{m-2}\left(\sum_{j=2}^{n-2}w_{n-3}^{ij}\ket{j}\right), 1\leq i\leq n-4,
\end{align*}
\begin{align}\label{Eq:2m3sizeUPBmm}
\notag
\ket{\psi_{i+3m+4n-20}}=&\left(\sum_{j=2}^{m-2}w_{m-3}^{ij}\ket{j}\right)\ket{1}, 1\leq i\leq m-4,\\
\notag
&\ldots\\
\notag
\ket{\psi_{mn-2n+i}}=&\left(\ket{\iota-1}+\ket{\iota}\right)\left(\sum_{j=\iota-1}^{n-\iota}w_{n-m+2}^{ij}\ket{j}\right),\\
\notag
 &1\leq i\leq n-m+1, \\
\notag
\ket{\psi_{mn-n-m+2+i}}=&\left(\ket{\iota-1}-\ket{\iota}\right)\left(\sum_{j=\iota-1}^{n-\iota}w_{n-m+2}^{ij}\ket{j}\right),\\
\notag
& 0\leq i\leq n-m+1, \\
\notag
\ket{S}=&\left(\ket{0}+\ket{1}+\ldots+\ket{m-1}\right)\\
&\left(\ket{0}+\ket{1}+\ldots+\ket{n-1}\right).
\end{align}

We first show that the above states can be perfectly distinguished by LOCC with an $\iota\otimes \iota$ maximally entangled state in Theorem \ref{Thm:mevendis}. Then we consider  the  UPBs in Proposition~\ref{pro:generalization} for $m$ is odd in Theorem \ref{Thm:mevendis}.
%
%is odd, the same result can be obtained by We also show for odd $m\geq 3$ by using the facts. The expression of  UPB for odd $m\geq 3$  can be obtained in the same way.

\begin{theorem}\label{Thm:mevendis}
  The UPB constructed in Proposition~\ref{pro:generalization}  can be perfectly distinguished by LOCC with a $\lceil\frac{m}{2}\rceil\otimes \lceil\frac{m}{2}\rceil$ maximally entangled state.
\end{theorem}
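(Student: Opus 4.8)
The plan is to generalize the two–phase protocol of Lemma~\ref{lem:disti44} from $m=4$ to arbitrary $m$, treating the even and odd cases along the same lines. Write $\iota=\lceil\frac{m}{2}\rceil$ and let Alice and Bob share $\ket{\psi}_{ab}=\sum_{k=0}^{\iota-1}\ket{kk}$. The geometry I would exploit is that the U-tile structures of Proposition~\ref{pro:generalization} are \emph{spirals}: for even $m$ they consist of $\iota-1$ rectangular frames (four tiles each, forming the top, right, bottom and left border of a shrinking window) together with one central $2\times(n-m+2)$ tile, and for odd $m$ of $\iota-2$ such frames together with a five-tile core. The entanglement is used exactly once, in an initial $\iota$-outcome measurement by Alice with rank-$m$ projectors that generalize $A_1,A_2$ of Lemma~\ref{lem:disti44}: each outcome ties every one of Alice's $m$ rows to one of the $\iota$ ancilla values, the assignment being chosen so that the top and bottom border tiles of every frame, together with the stopper $\ket{S}$, are cut into pieces living in distinct ancilla sectors on Bob's side, exactly as $\ket{S}$ was split in Eq.~(\ref{Eq:A1acts}). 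By the permutation symmetry of $\ket{\psi}_{ab}$ the $\iota$ outcomes produce equivalent post-measurement ensembles, so I may fix one branch.

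The second phase is a deterministic peeling of the spiral from the outside inwards, realized by alternating local projective measurements on $Aa$ and $Bb$. I would formalize this as a structural induction on the number of frames. At each stage Bob's projective measurement isolates the two horizontal tiles of the current outer frame---each nontrivial outcome either identifies a single product state outright or leaves a set in which Bob holds one fixed state while Alice holds mutually orthogonal states (hence locally distinguishable), mirroring the roles of $B_1,\dots,B_{n+1}$ in Lemma~\ref{lem:disti44}---after which Alice's measurement does the same for the two vertical tiles. What survives, up to relabelling of indices and the ancilla correlations already in place, is the analogous discrimination instance for the next inner window, so the induction closes once we reach the innermost object: for even $m$ the central $2\times(n-m+2)$ tile, a two-row block on which the surviving states form an orthogonal product set and are trivially finished, and for odd $m$ the fixed five-tile core, which I would dispatch by an explicit small sub-protocol analogous to the final steps of Lemma~\ref{lem:disti44}. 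Throughout I would track $\ket{S}$ carefully, since after the initial cut a fragment of it persists down every branch and must be separated from the surviving genuine UPB elements at each step.

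The main obstacle is precisely this bookkeeping. The protocol is a measurement tree of depth $\Theta(m)$, and at every node one must verify simultaneously that the proposed operators form a complete positive measurement, that the claimed subset of states survives each outcome with the correct ancilla labels, and that the reduction genuinely reproduces the inner-window instance so that the induction hypothesis applies. The stopper is the delicate part: it is the unique state not orthogonal to the missing states, it fragments across ancilla sectors under the initial entangled measurement, and one of its fragments reappears in every branch, so showing at each stage that Alice or Bob can separate it from the remaining product states---without it ever colliding with a genuine UPB element---is where the real content lies. For the odd case there is the additional, self-contained check that the five-tile core is distinguishable with the residual entanglement; since $\lceil\frac{m}{2}\rceil=\lceil\frac{m+1}{2}\rceil$ for odd $m$, this fits within the same entanglement budget as the even case.
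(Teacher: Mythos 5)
Your plan coincides with the paper's own proof: the paper also uses an initial $\lceil\frac{m}{2}\rceil$-outcome entangled measurement by Alice with rank-$m$ projectors generalizing $A_1,A_2$ of Lemma~\ref{lem:disti44}, then peels the outermost four-tile frame by alternating projective measurements of Bob and Alice, and closes by induction because the surviving branch is exactly the $\bbC^{m-2}\otimes\bbC^{n-2}$ instance with the residual $(\lceil\frac{m}{2}\rceil-1)\otimes(\lceil\frac{m}{2}\rceil-1)$ ancilla correlations (your frame-by-frame induction is the same descent, just with the base case pushed to the central tile rather than to $m=4$), and it likewise dispatches odd $m$ by the analogous argument on the five-tile core. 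The bookkeeping you flag as the main burden is precisely what the paper's displayed equations carry out.
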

\begin{proof} Let $m\geq 4$ be even.
	We prove it by induction on $m$.
	When $m=4$, we have proved the statement in Lemma~\ref{lem:disti44}.  When $k=m-2$, assume the states in Eqs. (\ref{Eq:2m3sizeUPBmm}) can be locally distinguished with an  $(\iota-1)\otimes(\iota-1)$ maximally entangled state for any $n\geq m-2$. We only need to show when $k=m$, Eqs. (\ref{Eq:2m3sizeUPBmm}) can be locally distinguished with an  $\iota\otimes \iota$ maximally entangled state for any $n\geq m$.
	let Alice and Bob share an $\iota\otimes\iota$ maximally entangled
	state $\ket{\psi}_{ab}=\sum_{j=0}^{\iota-1}\ket{jj}$. Let $\ket{\psi_i'}=\ket{\psi_i}_{AB}\ket{\psi}_{ab}$ for $1\leq i\leq mn-2m+3$ and $\ket{S'}=\ket{S}_{AB}\ket{\psi}_{ab}$. Then  Alice performs an $\iota$-outcome measurement on each of these $(mn-2m+4)$ states, each outcome corresponding to a rank-$m$ projector:
	\begin{align*}
	A_1=&\ket{00}_{Aa}\bra{00}+\ket{10}_{Aa}\bra{10}+\ldots+\ket{\iota0}_{Aa}\bra{\iota0}\\
	&+\ket{(\iota+1)1}_{Aa}\bra{(\iota+1)1}+\ldots\\
	&+\ket{(m-1)(\iota-1)}_{Aa}\bra{(m-1)(\iota-1)};\\
    A_{i}=&\sum_{j=0}^{\iota-1}\ket{j(i-1)}_{Aa}\bra{j(i-1)}+\\
   &\sum_{j=0}^{\iota-1}\ket{(\iota+j)(j+i-1)}_{Aa}\bra{(\iota+j)(j+i-1)},
	\end{align*}
	for $2\leq i\leq \iota$. Here the additions in system $a$ are modulo $\iota$. Operating  $A_1$  on systems $Aa$, each of the initial states is transformed into:
	\begin{align}\label{Eq:mA1acts}
	\notag
	\ket{\phi_i}=&\ket{\psi_i}\ket{00}, 1\leq i\leq n-2,\\
	\notag
	\ket{\phi_{i+n-2}}=&\left(\sum_{j=0}^{\iota-1}w_{m-1}^{ij}\ket{j}\right)\ket{n-1}\ket{00}+\\
	\notag
	&\sum_{j=0}^{\iota-2}w_{m-1}^{i(\iota+j)}\ket{\iota+j}\ket{n-1}\ket{jj},\\
	\notag
	&1\leq i\leq m-2,\\
	\notag
	\ket{\phi_{i+m+n-4}}=&\ket{\psi_{i+m+n-4}}\ket{(\iota-1)(\iota-1)},\\
	\notag
	 &1\leq i\leq n-2,\\
%	\end{align}
%	
%	\begin{align}\label{Eq:mA1acts}
	\notag \ket{\phi_{i+m+2n-6}}=&\left(\sum_{j=1}^{\iota-1}w_{m-1}^{ij}\ket{j}\right)\ket{0}\ket{00}+\\
	&\sum_{j=0}^{\iota-1}w_{m-1}^{i(\iota+j)}\ket{\iota+j}\ket{0}\ket{jj},\\
	\notag	
	&1\leq i\leq m-2,\\
	\notag
	&\ldots\\
	\notag
	\ket{S}\rightarrow&\left(\sum_{j=0}^{\iota-1}\ket{j}\right)\left(\sum_{e=0}^{n-1}\ket{e}\right)\ket{00}+\\
&\sum_{j=0}^{\iota-1}\ket{\iota+j}\left(\sum_{e=0}^{n-1}\ket{e}\right)\ket{jj}.
	\end{align}  	
	Similarly, we only need to consider the operator $A_1$.
	
	Now, we show the local distinguishability of the states in Eqs. (\ref{Eq:mA1acts}). Similar to Lemma~\ref{lem:disti44}, Bob can identify   $\ket{\phi_{i+m+n-4}}$, $1\leq i\leq n-2$, and $\ket{S}$,  by projectors $B_i=\left(\sum_{j=1}^{n-1}w_{n-1}^{ij}\ket{j}\right)_B\left(\sum_{j=1}^{n-1}w_{n-1}^{ij}\bra{j}\right)\otimes\ket{\iota-1}_{b}\bra{\iota-1}$, $1\leq i\leq n-1$.
	
	Then Bob uses  the $n$th projector $B_{n}=\ket{n-1}_B\bra{n-1}\otimes\left(\sum_{j=0}^{\iota-2}\ket{j}_b\bra{j}\right)$. It leaves $\ket{\phi_{i+n-2}}$, $1\leq i\leq m-2$, $\ket{S}\rightarrow  \left(\sum_{j=0}^{\iota-1}\ket{j}\right)\ket{n-1}\ket{00}+\sum_{j=0}^{\iota-2}\ket{\iota+j}\ket{n-1}\ket{jj}$. Then Bob makes a projective measurement on system $b$ by projecting $\sum_{i=0}^{\iota-2}w_{\iota-1}^{ij}\ket{i}_b$, $0\leq j\leq \iota-2$, and gets the same states in Bob's party. Thus, Alice can
	distinguish these states.
	
	Bob's last out come is a projector  $B_{n+1}=I-B_1-B_2-\ldots-B_n$. It leaves $\ket{\phi_{i}}$, $1\leq i\leq n-2$ and $m+2n-5\leq i\leq mn-2m+3$ and 	$\ket{S}\rightarrow \ket{S'}=\left(\sum_{j=0}^{\iota-1}\ket{j}\right)\left(\sum_{e=0}^{n-2}\ket{e}\right)\ket{00}+
\sum_{j=0}^{\iota-2}\ket{\iota+j}\left(\sum_{e=0}^{n-2}\ket{e}\right)\ket{jj}+\ket{m-1}\ket{0}\ket{(\iota-1)(\iota-1)}$.
 Then Alice uses a projector $A_{n+1,1}=\ket{0}_A\bra{0}\otimes\ket{0}_a\bra{0}$, and leaves $\ket{\phi_{i}}$, $1\leq i\leq n-2$ and $\ket{S}\rightarrow\ket{0}\left(\sum_{e=0}^{n-2}\ket{e}\right)\ket{00}$, which can be easily
	distinguished by Bob. When Alice uses the projector $A_{n+1,2}=I-A_{n+1,1}$, it leaves  $\ket{\phi_{i}}$, $m+2n-5\leq i\leq mn-2m+3$, and $\ket{S}\rightarrow  \ket{S'}- \ket{0}\left(\sum_{e=0}^{n-2}\ket{e}\right)\ket{00}$. Then, Bob uses a projector $B_{n+1,2,1}=\ket{0}_B\bra{0}\otimes\left(\sum_{j=0}^{\iota-1}\ket{j}_b\bra{j}\right)$, leaves $\ket{\phi_{i}}$, $m+2n-5\leq i\leq 2m+2n-8$, and $\ket{S}\rightarrow \left(\sum_{j=1}^{\iota-1}\ket{j}\right)\ket{0}\ket{00}+\sum_{j=0}^{\iota-1}\ket{\iota+j}\ket{0}\ket{jj}$.
Then Bob makes a projective measurement on system $b$ by projecting $\sum_{i=0}^{\iota-1}w_{\iota}^{ij}\ket{i}_b$, $0\leq j\leq \iota-1$. Then every projector can get the same state in Bob's party. Thus, Alice can distinguish these states. When Bob uses the projector $B_{n+1,2,2}=I-B_{n+1,2,1}$, it leaves $\ket{\phi_{i}}$, $2m+2n-7\leq i\leq mn-2m+3$ and $\ket{S}\rightarrow
\left(\sum_{j=1}^{\iota-1}\ket{j}\right)\left(\sum_{e=1}^{n-2}\ket{e}\right)\ket{00}+\sum_{j=0}^{\iota-2}\ket{\iota+j}\left(\sum_{e=1}^{n-2}\ket{e}\right)\ket{jj}$.
By induction hypothesis, these states in $\bbC^{m-2}\otimes \bbC^{n-2}$ are locally
	distinguishable.
	
	Thus, the states in Eqs. (\ref{Eq:2m3sizeUPBmm}) can be perfectly distinguished by  LOCC with an $\iota\otimes \iota$ maximally entangled
	state through our protocol.
	
	In Ref. \cite{zhang2020locally}, the authors showed that  when $m=n\geq 3$ are odd, the UPB constructed from Proposition~\ref{pro:generalization} can be perfectly distinguished by LOCC with a $\lceil\frac{m}{2}\rceil\otimes\lceil\frac{m}{2}\rceil$ maximally entangled states. Applying the similar argument as above to odd $m$, we can show that the result is true for any UPB in Proposition~\ref{pro:generalization} when $m\leq n$ .
%
% We can also show that when $3\leq m\leq n$ and $m$ is odd, the constructed from can be  perfectly distinguished by LOCC with a $\lceil\frac{m}{2}\rceil\otimes\lceil\frac{m}{2}\rceil$ maximally entangled state for the same discussion as above. Thus, we obtain the result.
\end{proof}

Ref. \cite{cohen2008understanding} has shown that the Gentiles2 UPB can be perfectly distinguished by LOCC with a $\lceil\frac{m}{2}\rceil\otimes \lceil\frac{m}{2}\rceil$ maximally entangled state. They also wonder whether other types of UPBs can be locally distinguished
by efficiently using entanglement resource. In Proposition~\ref{pro:generalization} we have constructed a novel type of UPBs that can be perfectly distinguished by LOCC with a $\lceil\frac{m}{2}\rceil\otimes \lceil\frac{m}{2}\rceil$ maximally entangled state. Our UPB has different size from that of Gentiles2 UPB. We  conjecture that every UPB may be distinguished in this way. Our results also show how to use entanglement efficiently.

\section{Conclusion}
\label{sec:con}

We showed that a tile structure gives a UPB if and only if it is a U-tile structure, and constructed UPBs of large size by constructing U-tile structures. We also proved that some UPBs in $\bbC^m\otimes \bbC^n$  can be perfectly distinguished by LOCC assisted with a $\lceil\frac{m}{2}\rceil\otimes\lceil\frac{m}{2}\rceil$ maximally entangled state. Our future work is to give more constructions of U-tile structures, and find  the  maximum number of tiles in a U-tile structure of size $m\times n$. It is also meaningful to extend the U-tile property to multipartite systems.

\section*{Acknowledgements}

FS and XZ were supported by NSFC under Grant No. 11771419,  the Fundamental Research Funds for the Central Universities,	and Anhui Initiative in Quantum Information Technologies under Grant No. AHY150200. LC was supported by the  NNSF of China (Grant No. 11871089), and the Fundamental Research Funds for the Central Universities (Grant Nos. KG12080401 and ZG216S1902).

\appendix
\begin{widetext}
\section{Proof of Proposition~\ref{pro:upb42m-1}}

\label{app:proof=prop5}

First, we construct a UPB of size $(m^2-k)$ in $\bbC^m\otimes\bbC^m$ for $4\leq k\leq 2m-1$ and $m\geq 4$.
	By Theorem~\ref{Thm:U-tile}, we only need to construct a U-tile structure with $t$-tiles in $\bbC^m\otimes\bbC^m$ for $m\geq 4$ and $ 5\leq t\leq 2m$.
	When $m=4$, we can construct U-tile structures with $5,6,7,8$-tiles in $\bbC^4\otimes\bbC^4$ in Figure \ref{fig:5678=4x4}.
	\begin{figure}[H]
		\centering
		\begin{tikzpicture}
		[%%%%%%%%%%%%%%%%%%%%%%%%%%%%%%
		box/.style={rectangle,draw=black,thick, minimum size=1cm},
		]%%%%%%%%%%%%%%%%%%%%%%%%%%%%%%
		
		\foreach \x in {0,1,...,3}{
			\foreach \y in {0,1,...,3}
			\node[box] at (\x,\y){};
		}
		\node[box]  at (0,3){1};
		\node[box]  at (1,3){1};
		\node[box]  at (2,3){1};
		\node[box]  at (3,3){2};
		\node[box]  at (3,2){2};
		\node[box]  at (3,1){2};
		\node[box]  at (3,0){3};
		\node[box]  at (2,0){3};
		\node[box]  at (1,0){3};
		\node[box]  at (0,0){4};
		\node[box]  at (0,1){4};
		\node[box]  at (0,2){4};
		\node[box]  at (1,1){5};
		\node[box]  at (1,2){5};
		\node[box]  at (2,1){5};
		\node[box]  at (2,2){5};	
		\end{tikzpicture}
		\quad
		\begin{tikzpicture}
		[%%%%%%%%%%%%%%%%%%%%%%%%%%%%%%
		box/.style={rectangle,draw=black,thick, minimum size=1cm},
		]%%%%%%%%%%%%%%%%%%%%%%%%%%%%%%
		
		\foreach \x in {0,1,...,3}{
			\foreach \y in {0,1,...,3}
			\node[box] at (\x,\y){};
		}
		\node[box]  at (0,3){1};
		\node[box]  at (1,3){1};
		\node[box]  at (2,3){6};
		\node[box]  at (3,3){2};
		\node[box]  at (3,2){2};
		\node[box]  at (3,1){2};
		\node[box]  at (3,0){3};
		\node[box]  at (2,0){6};
		\node[box]  at (1,0){3};
		\node[box]  at (0,0){4};
		\node[box]  at (0,1){4};
		\node[box]  at (0,2){4};
		\node[box]  at (1,1){5};
		\node[box]  at (1,2){5};
		\node[box]  at (2,1){5};
		\node[box]  at (2,2){5};	
		\end{tikzpicture}
		\quad
		\begin{tikzpicture}
		[%%%%%%%%%%%%%%%%%%%%%%%%%%%%%%
		box/.style={rectangle,draw=black,thick, minimum size=1cm},
		]%%%%%%%%%%%%%%%%%%%%%%%%%%%%%%
		
		\foreach \x in {0,1,...,3}{
			\foreach \y in {0,1,...,3}
			\node[box] at (\x,\y){};
		}
		\node[box]  at (0,3){1};
		\node[box]  at (1,3){1};
		\node[box]  at (2,3){6};
		\node[box]  at (3,3){2};
		\node[box]  at (3,2){7};
		\node[box]  at (3,1){2};
		\node[box]  at (3,0){3};
		\node[box]  at (2,0){6};
		\node[box]  at (1,0){3};
		\node[box]  at (0,0){4};
		\node[box]  at (0,1){4};
		\node[box]  at (0,2){7};
		\node[box]  at (1,1){5};
		\node[box]  at (1,2){5};
		\node[box]  at (2,1){5};
		\node[box]  at (2,2){5};	
		\end{tikzpicture}
		\quad
		\begin{tikzpicture}
		[%%%%%%%%%%%%%%%%%%%%%%%%%%%%%%
		box/.style={rectangle,draw=black,thick, minimum size=1cm},
		]%%%%%%%%%%%%%%%%%%%%%%%%%%%%%%
		
		\foreach \x in {0,1,...,3}{
			\foreach \y in {0,1,...,3}
			\node[box] at (\x,\y){};
		}
		\node[box]  at (0,3){1};
		\node[box]  at (1,3){1};
		\node[box]  at (1,2){2};
		\node[box]  at (2,2){2};
		\node[box]  at (2,1){3};
		\node[box]  at (3,1){3};
		\node[box]  at (3,0){4};
		\node[box]  at (0,0){4};
		\node[box]  at (0,2){5};
		\node[box]  at (0,1){5};
		\node[box]  at (1,1){6};
		\node[box]  at (1,0){6};
		\node[box]  at (2,3){7};
		\node[box]  at (2,0){7};
		\node[box]  at (3,3){8};
		\node[box]  at (3,2){8};	
		\end{tikzpicture}
		\caption{U-tile structures with $5,6,7,8$-tiles in $\bbC^4\otimes \bbC^{4}$.}
		\label{fig:5678=4x4}
	\end{figure}
	When $m=5$, we illustrate our construction in Figures \ref{fig:567=5x5-1} and \ref{fig:8910=5x5}. In particular we can construct U-tile structures with $5,6,7$-tiles in $\bbC^5\otimes \bbC^{5}$ based on  U-tile structures with $5,6,7$-tiles in $\bbC^4\otimes\bbC^4$; and we can construct U-tile structures with $8,9,10$-tiles in $\bbC^5\otimes \bbC^{5}$ based on the U-tile structure with $8$-tiles in $\bbC^4\otimes\bbC^4$. We append a row and a column on the top and right of the  U-tile structures in $\bbC^4\otimes\bbC^4$.
	\begin{figure}[H]
		\centering
		\begin{tikzpicture}
		[%%%%%%%%%%%%%%%%%%%%%%%%%%%%%%
		box/.style={rectangle,draw=black,thick, minimum size=1cm},
		]%%%%%%%%%%%%%%%%%%%%%%%%%%%%%%
		
		\foreach \x in {0,1,...,4}{
			\foreach \y in {0,1,...,4}
			\node[box] at (\x,\y){};
		}
		\node[box]  at (0,3){1};
		\node[box]  at (1,3){1};
		\node[box]  at (2,3){1};
		\node[box]  at (3,3){2};
		\node[box]  at (3,2){2};
		\node[box]  at (3,1){2};
		\node[box]  at (3,0){3};
		\node[box]  at (2,0){3};
		\node[box]  at (1,0){3};
		\node[box]  at (0,0){4};
		\node[box]  at (0,1){4};
		\node[box]  at (0,2){4};
		\node[box]  at (1,1){5};
		\node[box]  at (1,2){5};
		\node[box]  at (2,1){5};
		\node[box]  at (2,2){5};
		\node[box] at (0,4){1};
		\node[box] at (1,4){1};
		\node[box] at (2,4){1};
		\node[box] at (3,4){2};
		\node[box] at (4,4){2};
		\node[box] at (4,3){2};
		\node[box] at (4,2){2};
		\node[box] at (4,1){2};
		\node[box] at (4,0){3};
		\end{tikzpicture}
		\quad
		\begin{tikzpicture}
		[%%%%%%%%%%%%%%%%%%%%%%%%%%%%%%
		box/.style={rectangle,draw=black,thick, minimum size=1cm},
		]%%%%%%%%%%%%%%%%%%%%%%%%%%%%%%
		
		\foreach \x in {0,1,...,4}{
			\foreach \y in {0,1,...,4}
			\node[box] at (\x,\y){};
		}
		\node[box]  at (0,3){1};
		\node[box]  at (1,3){1};
		\node[box]  at (2,3){6};
		\node[box]  at (3,3){2};
		\node[box]  at (3,2){2};
		\node[box]  at (3,1){2};
		\node[box]  at (3,0){3};
		\node[box]  at (2,0){6};
		\node[box]  at (1,0){3};
		\node[box]  at (0,0){4};
		\node[box]  at (0,1){4};
		\node[box]  at (0,2){4};
		\node[box]  at (1,1){5};
		\node[box]  at (1,2){5};
		\node[box]  at (2,1){5};
		\node[box]  at (2,2){5};
		\node[box] at (0,4){1};
		\node[box] at (1,4){1};
		\node[box] at (2,4){6};
		\node[box] at (3,4){2};
		\node[box,] at (4,4){2};
		\node[box] at (4,3){2};
		\node[box] at (4,2){2};
		\node[box,] at (4,1){2};
		\node[box] at (4,0){3};	
		\end{tikzpicture}
		\quad
		\begin{tikzpicture}
		[%%%%%%%%%%%%%%%%%%%%%%%%%%%%%%
		box/.style={rectangle,draw=black,thick, minimum size=1cm},
		]%%%%%%%%%%%%%%%%%%%%%%%%%%%%%%
		
		\foreach \x in {0,1,...,4}{
			\foreach \y in {0,1,...,4}
			\node[box] at (\x,\y){};
		}
		\node[box]  at (0,3){1};
		\node[box]  at (1,3){1};
		\node[box]  at (2,3){6};
		\node[box]  at (3,3){2};
		\node[box]  at (3,2){7};
		\node[box]  at (3,1){2};
		\node[box]  at (3,0){3};
		\node[box]  at (2,0){6};
		\node[box]  at (1,0){3};
		\node[box]  at (0,0){4};
		\node[box]  at (0,1){4};
		\node[box]  at (0,2){7};
		\node[box]  at (1,1){5};
		\node[box]  at (1,2){5};
		\node[box]  at (2,1){5};
		\node[box]  at (2,2){5};
		\node[box] at (0,4){1};
		\node[box] at (1,4){1};
		\node[box] at (2,4){6};
		\node[box] at (3,4){2};
		\node[box] at (4,4){2};
		\node[box] at (4,3){2};
		\node[box] at (4,2){7};
		\node[box] at (4,1){2};
		\node[box] at (4,0){3};	
		\end{tikzpicture}
		\caption{U-tile structures with $5,6,7$-tiles in $\bbC^5\otimes \bbC^{5}$ based on U-tile structures with $5,6,7$-tiles in $\bbC^4\otimes \bbC^{4}$.}
		\label{fig:567=5x5-1}
	\end{figure}

	\begin{figure}[H]
		\centering
		\begin{tikzpicture}
		[%%%%%%%%%%%%%%%%%%%%%%%%%%%%%%
		box/.style={rectangle,draw=black,thick, minimum size=1cm},
		]%%%%%%%%%%%%%%%%%%%%%%%%%%%%%%
		
		\foreach \x in {0,1,...,4}{
			\foreach \y in {0,1,...,4}
			\node[box] at (\x,\y){};
		}
		\node[box]  at (0,3){1};
		\node[box]  at (1,3){1};
		\node[box]  at (1,2){2};
		\node[box]  at (2,2){2};
		\node[box]  at (2,1){3};
		\node[box]  at (3,1){3};
		\node[box]  at (3,0){4};
		\node[box]  at (0,0){4};
		\node[box]  at (0,2){5};
		\node[box]  at (0,1){5};
		\node[box]  at (1,1){6};
		\node[box]  at (1,0){6};
		\node[box]  at (2,3){7};
		\node[box]  at (2,0){7};
		\node[box]  at (3,3){8};
		\node[box]  at (3,2){8};
		\node[box] at (0,4){1};
		\node[box] at (1,4){1};
		\node[box] at (2,4){7};
		\node[box] at (3,4){8};
		\node[box] at (4,4){8};
		\node[box] at (4,3){8};
		\node[box] at (4,2){8};
		\node[box] at (4,1){3};
		\node[box] at (4,0){4};		
		\end{tikzpicture}
		\quad
		\begin{tikzpicture}
		[%%%%%%%%%%%%%%%%%%%%%%%%%%%%%%
		box/.style={rectangle,draw=black,thick, minimum size=1cm},
		]%%%%%%%%%%%%%%%%%%%%%%%%%%%%%%
		
		\foreach \x in {0,1,...,4}{
			\foreach \y in {0,1,...,4}
			\node[box] at (\x,\y){};
		}
		\node[box]  at (0,3){1};
		\node[box]  at (1,3){1};
		\node[box]  at (1,2){2};
		\node[box]  at (2,2){2};
		\node[box]  at (2,1){3};
		\node[box]  at (3,1){3};
		\node[box]  at (3,0){4};
		\node[box]  at (0,0){4};
		\node[box]  at (0,2){5};
		\node[box]  at (0,1){5};
		\node[box]  at (1,1){6};
		\node[box]  at (1,0){6};
		\node[box]  at (2,3){7};
		\node[box]  at (2,0){7};
		\node[box]  at (3,3){8};
		\node[box]  at (3,2){8};
		\node[box] at (0,4){1};
		\node[box] at (1,4){1};
		\node[box] at (2,4){7};
		\node[box] at (3,4){8};
		\node[box] at (4,4){9};
		\node[box] at (4,3){9};
		\node[box] at (4,2){9};
		\node[box] at (4,1){9};
		\node[box] at (4,0){4};		
		\end{tikzpicture}
		\quad
		\begin{tikzpicture}
		[%%%%%%%%%%%%%%%%%%%%%%%%%%%%%%
		box/.style={rectangle,draw=black,thick, minimum size=1cm},
		]%%%%%%%%%%%%%%%%%%%%%%%%%%%%%%
		
		\foreach \x in {0,1,...,4}{
			\foreach \y in {0,1,...,4}
			\node[box] at (\x,\y){};
		}
		\node[box]  at (0,3){1};
		\node[box]  at (1,3){1};
		\node[box]  at (1,2){2};
		\node[box]  at (2,2){2};
		\node[box]  at (2,1){3};
		\node[box]  at (3,1){3};
		\node[box]  at (3,0){4};
		\node[box]  at (0,0){4};
		\node[box]  at (0,2){5};
		\node[box]  at (0,1){5};
		\node[box]  at (1,1){6};
		\node[box]  at (1,0){6};
		\node[box]  at (2,3){7};
		\node[box]  at (2,0){7};
		\node[box]  at (3,3){8};
		\node[box]  at (3,2){8};
		\node[box] at (0,4){10};
		\node[box] at (1,4){10};
		\node[box] at (2,4){10};
		\node[box] at (3,4){10};
		\node[box] at (4,4){9};
		\node[box] at (4,3){9};
		\node[box] at (4,2){9};
		\node[box] at (4,1){9};
		\node[box] at (4,0){4};		
		\end{tikzpicture}
		\caption{U-tile structures with $8,9,10$-tiles in $\bbC^5\otimes \bbC^{5}$ based on the U-tile structure with $8$-tiles in $\bbC^4\otimes \bbC^{4}$.}
		\label{fig:8910=5x5}
	\end{figure}
   When $m\geq 6$, for each $5\leq t\leq 2(m-1)$, we  construct a U-tile structure with $t$-tiles in  $\bbC^m\otimes \bbC^m$ based on the U-tile structure $\cT$ with $t$-tiles in $\bbC^{m-1}\otimes\bbC^{m-1}$, by first appending a new row which is identical to the first row of length $m-1$ on the top, and a new column which is identical to the last column of length $m$ on the right. See Figure \ref{fig:567=5x5-1} for examples. For $t=2m-1$ and $2m$, we can construct U-tile structures with $t$-tiles in  $\bbC^m\otimes \bbC^m$ based on the U-tile structure with $2(m-1)$-tiles in $\bbC^{m-1}\otimes\bbC^{m-1}$. See Figures~\ref{Fig:meven} and~\ref{Fig:modd} for even $m$ and odd $m$, respectively. The new rows and new columns are on the top and right.
  % \begin{figure}[H]
%	\centering
%	\begin{tikzpicture}
%	[%%%%%%%%%%%%%%%%%%%%%%%%%%%%%%
%	box/.style={rectangle,draw=black,thick, minimum size=1cm},
%	]%%%%%%%%%%%%%%%%%%%%%%%%%%%%%%
%	
%	\foreach \x in {0,1,...,5}{
%		\foreach \y in {0,1,...,5}
%		\node[box] at (\x,\y){};
%	}
%	\node[box] at (0,4){$i_1$};
%	\node[box] at (1,4){$i_2$};
%	\node[box] at (2,4){$\ldots$};
%	\node[box] at (3,4){$i_{m-2}$};
%	\node[box] at (4,4){$i_{m-1}$};
%	\node[box] at (4,3){$j_{1}$};
%	\node[box] at (4,2){$\vdots$};
%	\node[box ] at (4,1){$j_{m-3}$};
%	\node[box,] at (4,0){$j_{m-2}$};
%	\node[box] at (0,5){$i_1$};
%	\node[box] at (1,5){$i_2$};
%	\node[box] at (2,5){$\ldots$};
%	\node[box] at (3,5){$i_{m-2}$};
%	\node[box] at (4,5){$i_{m-1}$};
%	\node[box] at (5,5){$i_{m-1}$};
%	\node[box] at (5,4){$i_{m-1}$};
%	\node[box] at (5,3){$j_1$};
%	\node[box] at (5,2){\vdots};
%	\node[box] at (5,1){$j_{m-3}$};
%	\node[box] at (5,0){$j_{m-2}$};		
%	\end{tikzpicture}
%	\caption{U-tile structure with $t$-tiles in $\bbC^m\otimes \bbC^m$ based on the U-tile structure with $t$-tiles in $\bbC^{m-1}\otimes \bbC^{m-1}$, where $5\leq t\leq 2(m-1)$.}
%	\label{fig:2m-1=5x5}
%\end{figure}

\begin{figure}[H]
	\centering
	\begin{tikzpicture}
	[%%%%%%%%%%%%%%%%%%%%%%%%%%%%%%
	box/.style={rectangle,draw=black,thick, minimum size=1cm},
	]%%%%%%%%%%%%%%%%%%%%%%%%%%%%%%
	
	\foreach \x in {0,1,...,5}{
		\foreach \y in {0,1,...,5}
		\node[box] at (\x,\y){};
	}
	\node[box] at (0,4){$i_1$};
	\node[box] at (1,4){$i_2$};
	\node[box] at (2,4){$\ldots$};
	\node[box] at (3,4){$i_{m-2}$};
	\node[box] at (4,4){$i_{m-1}$};
	\node[box] at (4,3){$j_{1}$};
	\node[box] at (4,2){$\vdots$};
	\node[box ] at (4,1){$j_{m-3}$};
	\node[box,] at (4,0){$j_{m-2}$};
	\node[box] at (0,5){5};
	\node[box] at (1,5){2$m$-1};
	\node[box] at (2,5){\ldots};
	\node[box] at (3,5){2$m$-1};
	\node[box] at (4,5){2$m$-1};
	\node[box] at (5,5){2$m$-1};
	\node[box] at (5,4){$i_{m-1}$};
	\node[box] at (5,3){$j_1$};
	\node[box] at (5,2){\vdots};
	\node[box] at (5,1){$j_{m-3}$};
	\node[box] at (5,0){4};		
	\end{tikzpicture}
	\quad
	\begin{tikzpicture}
	[%%%%%%%%%%%%%%%%%%%%%%%%%%%%%%
	box/.style={rectangle,draw=black,thick, minimum size=1cm},
	]%%%%%%%%%%%%%%%%%%%%%%%%%%%%%%
	
	\foreach \x in {0,1,...,5}{
		\foreach \y in {0,1,...,5}
		\node[box] at (\x,\y){};
	}
	\node[box] at (0,4){$i_1$};
	\node[box] at (1,4){$i_2$};
	\node[box] at (2,4){$\ldots$};
	\node[box] at (3,4){$i_{m-2}$};
	\node[box] at (4,4){$i_{m-1}$};
	\node[box] at (4,3){$j_{1}$};
	\node[box] at (4,2){$\vdots$};
	\node[box ] at (4,1){$j_{m-3}$};
	\node[box,] at (4,0){$j_{m-2}$};
	\node[box] at (0,5){5};
	\node[box] at (1,5){2$m$-1};
	\node[box] at (2,5){\ldots};
	\node[box] at (3,5){2$m$-1};
	\node[box] at (4,5){2$m$-1};
	\node[box] at (5,5){2$m$-1};
	\node[box] at (5,4){2$m$};
	\node[box] at (5,3){2$m$};
	\node[box] at (5,2){\vdots};
	\node[box] at (5,1){2$m$};
	\node[box] at (5,0){2$m$};		
	\end{tikzpicture}
	\caption{U-tile structures with $(2m-1),(2m)$-tiles in $\bbC^m\otimes \bbC^m$ based on the U-tile structure with $2(m-1)$-tiles in $\bbC^{m-1}\otimes \bbC^{m-1}$, where $m\geq 6$ is even.} \label{Fig:meven}
\end{figure}

\begin{figure}[H]
	\centering
	\begin{tikzpicture}
	[%%%%%%%%%%%%%%%%%%%%%%%%%%%%%%
	box/.style={rectangle,draw=black,thick, minimum size=1cm},
	]%%%%%%%%%%%%%%%%%%%%%%%%%%%%%%
	
	\foreach \x in {0,1,...,5}{
		\foreach \y in {0,1,...,5}
		\node[box] at (\x,\y){};
	}
	\node[box] at (0,4){$i_1$};
	\node[box] at (1,4){$i_2$};
	\node[box] at (2,4){$\ldots$};
	\node[box] at (3,4){$i_{m-2}$};
	\node[box] at (4,4){$i_{m-1}$};
	\node[box] at (4,3){$j_{1}$};
	\node[box] at (4,2){$\vdots$};
	\node[box ] at (4,1){$j_{m-3}$};
	\node[box,] at (4,0){$j_{m-2}$};
	\node[box] at (0,5){5};
	\node[box] at (1,5){$i_2$};
	\node[box] at (2,5){\ldots};
	\node[box] at (3,5){$i_{m-2}$};
	\node[box] at (4,5){$i_{m-1}$};
	\node[box] at (5,5){2$m$-1};
	\node[box] at (5,4){2$m$-1};
	\node[box] at (5,3){2$m$-1};
	\node[box] at (5,2){\vdots};
	\node[box] at (5,1){2$m$-1};
	\node[box] at (5,0){4};		
	\end{tikzpicture}
	\quad
	\begin{tikzpicture}
	[%%%%%%%%%%%%%%%%%%%%%%%%%%%%%%
	box/.style={rectangle,draw=black,thick, minimum size=1cm},
	]%%%%%%%%%%%%%%%%%%%%%%%%%%%%%%
	
	\foreach \x in {0,1,...,5}{
		\foreach \y in {0,1,...,5}
		\node[box] at (\x,\y){};
	}
	\node[box] at (0,4){$i_1$};
	\node[box] at (1,4){$i_2$};
	\node[box] at (2,4){$\ldots$};
	\node[box] at (3,4){$i_{m-2}$};
	\node[box] at (4,4){$i_{m-1}$};
	\node[box] at (4,3){$j_{1}$};
	\node[box] at (4,2){$\vdots$};
	\node[box ] at (4,1){$j_{m-3}$};
	\node[box,] at (4,0){$j_{m-2}$};
	\node[box] at (0,5){2$m$};
	\node[box] at (1,5){2$m$};
	\node[box] at (2,5){\ldots};
	\node[box] at (3,5){2$m$};
	\node[box] at (4,5){2$m$};
	\node[box] at (5,5){2$m$-1};
	\node[box] at (5,4){2$m$-1};
	\node[box] at (5,3){2$m$-1};
	\node[box] at (5,2){\vdots};
	\node[box] at (5,1){2$m$-1};
	\node[box] at (5,0){4};		
	\end{tikzpicture}
	\caption{U-tile structures with $(2m-1),(2m)$-tiles in $\bbC^m\otimes \bbC^m$ based on the U-tile structure with $2(m-1)$-tiles in $\bbC^{m-1}\otimes \bbC^{m-1}$, where $m\geq 6$ is odd.} \label{Fig:modd}
\end{figure}
	
	So far, we have constructed a UPB of size $(m^2-k)$ in $\bbC^m\otimes\bbC^m$ for $4\leq k\leq 2m-1$ and $m\geq 4$.  For the system $\bbC^m\otimes\bbC^n$ with $m\leq n$, a  U-tile structure  with $t$-tiles can be obtained from a U-tile structure $\cT$  with $t$-tiles in $\bbC^m\otimes\bbC^m$ by appending $n-m$ columns which are identical to the last column of $\cT$.
Hence by Theorem~\ref{Thm:U-tile}, there exists a UPB of size $(mn-k)$  in $\bbC^m\otimes\bbC^n$ for  $4\leq k\leq 2m-1$ and $4\leq m\leq n$.

By \cite{chen2013separability,feng2006unextendible}, there is no UPB of size $mn-1$, $mn-2$, $mn-3$ in $\bbC^m\otimes\bbC^n$ for $m,n\geq 3$.
To show that the maximum number of states in a UPB in $\bbC^m\otimes\bbC^m$,  $F(m,n)=mn-4$ for $3\leq m\leq n$, we only need
 a U-tile structure with $5$-tiles in $\bbC^m\otimes\bbC^n$ for $3\leq m\leq n$ by Theorem~\ref{Thm:U-tile}. See Figure~\ref{Fig:5tiles} for a construction.
\begin{figure}[H]
	\centering
	\begin{tikzpicture}
	[%%%%%%%%%%%%%%%%%%%%%%%%%%%%%%
	box/.style={rectangle,draw=black,thick, minimum size=1cm},
	]%%%%%%%%%%%%%%%%%%%%%%%%%%%%%%
	
	\foreach \x in {0,1,...,4}{
		\foreach \y in {0,1,...,4}
		\node[box] at (\x,\y){};
	}
	\node[box]  at (0,4){1};
	\node[box]  at (1,4){1};
	\node[box]  at (2,4){\dots};
	\node[box]  at (3,4){1};
	\node[box]  at (4,4){2};
	\node[box]  at (4,3){2};
	\node[box]  at (4,2){\vdots};
	\node[box]  at (4,1){2};
	\node[box]  at (4,0){3};
	\node[box]  at (3,0){3};
	\node[box]  at (2,0){\dots};
	\node[box]  at (1,0){3};
	\node[box]  at (0,0){4};
	\node[box]  at (0,1){4};
	\node[box]  at (0,2){\vdots};
	\node[box]  at (0,3){4};
	\node[box]  at (1,1){5};
	\node[box]  at (1,2){\vdots};
	\node[box]  at (1,3){5};
	\node[box]  at (2,1){\ldots};
	\node[box]  at (2,2){5};
	\node[box]  at (2,3){\ldots};
	\node[box]  at (3,1){5};
	\node[box]  at (3,2){\vdots};
	\node[box]  at (3,3){5};
	\draw (-1,4) node {0};
	\draw (-1,3) node {1};
	\draw (-1,2) node {\vdots};
	\draw (-1,1) node {$m-2$};
	\draw (-1,0) node {$m-1$};
	\draw (0,5) node {0};
	\draw (1,5) node {1};
	\draw (2,5) node {\ldots};
	\draw (3,5) node {$n-2$};
	\draw (4,5) node {$n-1$};
	\end{tikzpicture}
	\caption{A U-tile structure with 5-tiles in $\bbC^m\otimes \bbC^n$}\label{Fig:5tiles}
\end{figure}
\end{widetext}

%\bibliographystyle{IEEEtran}
%\bibliography{Reference}

\end{document}